\newtheorem{theorem}{Theorem}
\newtheorem{lemma}{Lemma}
\newtheorem{proposition}{Proposition}
\newcommand*{\blank}{\makebox[\widthof{$\;\coloneqq\;$}]{$\cdot$}}
\newcommand{\thm}[1]{\hyperref[thm:#1]{Theorem~\ref*{thm:#1}}}
\newcommand{\lem}[1]{\hyperref[lem:#1]{Lemma~\ref*{lem:#1}}}
\newcommand{\prop}[1]{\hyperref[prop:#1]{Proposition~\ref*{prop:#1}}}
\newcommand{\fig}[1]{\hyperref[fig:#1]{FIG.~\ref*{fig:#1}}}
\newcommand{\figg}[2]{\hyperref[fig:#1]{FIG.~\ref*{fig:#1}(#2)}}
\renewcommand{\sec}[1]{\hyperref[sec:#1]{Sec.~\ref*{sec:#1}}}
\DeclareMathOperator{\poly}{poly}
\DeclareMathOperator{\diag}{diag}
\tikzset{operator/.append style={rounded corners}}
\begin{document}

\preprint{APS/123-QED}

\title{Digital quantum simulator for the time-dependent Dirac equation using \texorpdfstring{\\}{} discrete-time quantum walks}

\author{Shigetora Miyashita}
\affiliation{Faculty of Environment and Information Studies, Keio University, 5322 Endo, Fujisawa, Kanagawa 252-0882, Japan}
\author{Takahiko Satoh}
\affiliation{Graduate School of Science and Technology, Keio University,
3-14-1 Hiyoshi, Kohoku-ku, Yokohama 223-8522, Japan}
\affiliation{Quantum Computing Center, Keio University, 3-14-1 Hiyoshi, Kohoku-ku, Yokohama 223-8522, Japan}
\author{Michihiko Sugawara}
\affiliation{Graduate School of Science and Technology, Keio University,
3-14-1 Hiyoshi, Kohoku-ku, Yokohama 223-8522, Japan}
\affiliation{Quantum Computing Center, Keio University, 3-14-1 Hiyoshi, Kohoku-ku, Yokohama 223-8522, Japan}
\author{Naphan Benchasattabuse}
\affiliation{Graduate School of Media and Governance, Keio University, 5322 Endo, Fujisawa, Kanagawa 252-0882, Japan}
\author{Ken M. Nakanishi}
\affiliation{Institute for Physics of Intelligence, The University of Tokyo, 7-3-1 Hongo, Bunkyo-ku, Tokyo 113-0033, Japan}
\author{Michal Hajdu\v{s}ek}
\affiliation{Graduate School of Media and Governance, Keio University, 5322 Endo, Fujisawa, Kanagawa 252-0882, Japan}
\author{Hyensoo Choi}
\affiliation{Faculty of Environment and Information Studies, Keio University, 5322 Endo, Fujisawa, Kanagawa 252-0882, Japan}
\author{Rodney Van Meter}
\affiliation{Faculty of Environment and Information Studies, Keio University, 5322 Endo, Fujisawa, Kanagawa 252-0882, Japan}

\date{\today}

\begin{abstract}
    We introduce a quantum algorithm for simulating the time-dependent Dirac equation in 3+1 dimensions using discrete-time quantum walks. Thus far, promising quantum algorithms have been proposed to simulate quantum dynamics in non-relativistic regimes efficiently. However, only some studies have attempted to simulate relativistic dynamics due to its theoretical and computational difficulty. By leveraging the convergence of discrete-time quantum walks to the Dirac equation, we develop a quantum spectral method that approximates smooth solutions with exponential convergence. This mitigates errors in implementing potential functions and reduces the overall gate complexity that depends on errors. We demonstrate that our approach does not require additional operations compared to the asymptotic gate complexity of non-relativistic real-space algorithms. Our findings indicate that simulating relativistic dynamics is achievable with quantum computers and can provide insights into relativistic quantum physics and chemistry.
\end{abstract}

\maketitle

\section{Introduction}
\label{sec:dtqw:introduction}

The Dirac equation~\cite{thaller2013dirac} successfully combines quantum mechanics with special relativity to accurately describe the behavior of electrons and positrons at relativistic speeds. This equation predicts unique properties of Dirac fermions that arise from the natural incorporation of relativistic effects. These effects include Zitterbewegung, which causes trembling motion, and the Klein paradox, which can be interpreted using particle-antiparticle production~\cite{zbMATH02565649,PhysRevD.23.2454,Hestenes1990-ld,doi:10.1119/1.1933296,Sidharth_2008,PhysRevLett.100.153002,Katsnelson2006-dw,PhysRevA.82.020101}. The Schrödinger equation cannot account for these effects. The incorporation of such corrections benefits various areas of physics. In particle physics, theoretical studies have shown that Dirac particles act as standard model fermions~\cite{cottingham_greenwood_2007} and may be related to neutrinos~\cite{Bilenky:2020wjn,BALANTEKIN2019488}. Additionally, the Dirac operator is critical in lattice gauge theory~\cite{PhysRevD.10.2445,PhysRevD.11.395,Martinez_2016,Bañuls2020,Dalmonte_2016,PRXQuantum.4.027001}. It has also been found that fermions appear as Dirac semimetals in condensed matter physics~\cite{PhysRevB.85.195320,PhysRevB.88.125427,PhysRevLett.108.140405}. Finally, the Dirac equation precisely predicts the electron orbits of heavy elements in chemistry by decoupling positron states~\cite{https://doi.org/10.1002/bbpc.19971010102,doi:10.1146/annurev.pc.36.100185.002203}.

Previous research has shown that discrete-time quantum walks~\cite{PhysRevA.58.915,WATROUS2001376,PhysRevA.70.042312} are essential models for understanding the time-dependent Dirac equation~\cite{thaller2013dirac,Gerritsma_2010,arrighi2014dirac,PhysRevLett.98.253005,Fleury_2023}. For instance, Strauch focused on the relationship between the one-dimensional Dirac equation and quantum walks~\cite{PhysRevA.73.054302}. He observed that discrete-time quantum walkers exhibit relativistic quantum dynamics while continuous-time quantum walkers exhibit non-relativistic quantum dynamics. Moreover, Chandrashekar \emph{et al.} recast the quantum walk operators (coin and shift) as the time evolution operators in relativistic quantum mechanics (mass and kinetic term) by taking the continuum limits~\cite{PhysRevA.81.062340}. This study proposed extending quantum walks in curved spacetime~\cite{PhysRevA.88.042301,Arrighi2016,mallick2019simulating}.

There has been significant progress in numerical simulations of the time-dependent Dirac equation~\cite{PhysRevA.59.604,FILLIONGOURDEAU20121403,Fillion_Gourdeau_2014,PhysRevLett.54.669,PhysRevA.53.1605,LORIN2022108474}, but simulation protocols still need improvement. Unless formulated as two-dimensional graphene-like matter, Dirac fermions evolve in 3+1 dimensions, as represented by heavy-ion scattering~\cite{PhysRevLett.55.2786,PhysRevA.24.103}. However, simulating the 3+1 dimensional Dirac equation generally requires a supercomputer, even with advanced algorithms, as the memory required for representing quantum systems scale exponentially with the system size~\cite{ANTOINE2017150}. To address this issue, digital quantum simulators based on quantum walks for the Dirac equation have gained attention~\cite{Fillion_Gourdeau_2017,Huerta_Alderete2020-de}. These simulators reduce the memory size using qubits, as well as the time complexity. However, implementing these approaches is limited by the number of quantum gates and auxiliary qubits. This is due to the frequent use of expensive multi-controlled CNOT gates~\cite{PhysRevA.106.042602}.

We address this problem by introducing a novel algorithm for simulating the 3+1 dimensional time-dependent Dirac equation using quantum walks.
In order to avoid the use of multi-controlled CNOT gates, we employ quantum spectral methods~\cite{Childs_2020,Childs2021highprecision,Childs2022quantumsimulationof} that diagonalize the kinetic term in Fourier space.
These methods allow us to approximate the solution as a sum of few basis functions while simultaneously maintaining high accuracy.
In addition, we describe the implementation of potential terms using position-dependent coin and phase operators~\cite{Ahmad_2020,nzongani2022quantum}, which allow for the introduction of arbitrarily shaped scalar and vector potentials of U(1) gauge theory. We compare the resources required to implement the quantum algorithms with non-relativistic cases. A detailed analysis shows that the asymptotic gate complexity of the proposed algorithms for the Dirac equation scales identically to the Schr\"{o}dinger equation for the same potentials. Finally, we apply the algorithms to simulate Zitterbewegung in both 1+1 and 3+1 dimensions and the Klein paradox in 1+1 dimensions. Due to the nature of spectral methods, we demonstrate that spectral convergence in our scheme is guaranteed.

The following is an outline of our paper. In \sec{dtqw:the Dirac equation}, we define the time-dependent Dirac equation in 3+1 dimensions. To simulate the dynamics on a quantum computer, we first perform Trotterization (time discretization) in \sec{dtqw:Time discretization} to decompose the Dirac Hamiltonian. We then discretize space in \sec{dtqw:Space discretization} to introduce computational grids. \sec{dtqw:discrete-time quantum walk} shows our implementation of the time evolution operator based on discrete-time quantum walks. We summarize the procedure by term by explaining the relationship between the time evolution operator and quantum walk operators. In \sec{dtqw:simulation results}, we present classical and quantum simulations of the Dirac equation. The simulations are tested in 1+1 and 3+1 dimensions, and we conduct a numerical analysis of computational costs and errors. Finally, we discuss the simulation results and their potential applications in \sec{dtqw:conclusions}.

\section{Time-Dependent Dirac equation}
\label{sec:dtqw:the Dirac equation}

This section covers the preliminaries required to simulate the time-dependent Dirac equation on a digital computer.
In this paper, we consider the Dirac equation in 3+1 dimensions in natural units
\begin{align}
    \label{eq:Dirac Eq}
    \bigl[\gamma^\mu(i\partial_\mu-eA_\mu)-m\bigr]\psi=0,
\end{align}
where $m$ is the mass of a Dirac particle, $\gamma^\mu$ is a set of $4\times4$ gamma matrices, and $A_\mu$ is the minimally coupled gauge field with $\mu=0,1,2,3$. We sum over repeated indices according to the Einstein notation. It defines the four-dimensional spacetime with the anticommutator,
\begin{align}
    \label{eq:Dirac Mat}
    \gamma^\mu\gamma^\nu+\gamma^\nu\gamma^\mu=2\eta^{\mu\nu},
\end{align}
where $\eta^{\mu\nu}=\diag(+1,-1,-1,-1)$ is the Minkowski metric that characterizes the Minkowski spacetime with a non-degenerate scalar product upon a metric space $\mathbb{R}^{1,3}$. The solution to~(\ref{eq:Dirac Eq}) is given by the four-component Dirac spinor $\psi$,
\begin{align}
    \psi=
    \left(
    \begin{array}{c}
         \psi_1 \\
         \psi_2 \\
         \psi_3 \\
         \psi_4
    \end{array}
    \right).
\end{align}
The large components $(\psi_1,\psi_2)^T$ and small components $(\psi_3,\psi_4)^T$ depend on each other while the set of two components has two degrees of freedom corresponding to the spin up and down. Thus the Dirac equation is a set of coupled partial differential equations for the spinor components.

To simulate the time-dependent Dirac equation, we must discretize the Minkowski spacetime defined by $\eta^{\mu\nu}.$  In the following, we introduce the Hamiltonian form of the Dirac equation to derive the time evolution operator in a closed form. We then consider the discretization of position and momentum.

\subsection{Time discretization}
\label{sec:dtqw:Time discretization}
We now specify how time needs to be discretized in order to implement the time evolution on a discrete quantum simulator.
In the rest of this manuscript, we work in the Schr\"{o}dinger picture, where the time evolution of a quantum state $\psi$ is generated by the Hamiltonian $H$.
The time-dependent Dirac equation in the Schr\"{o}dinger picture reads~\cite{thaller2013dirac}
\begin{align}
    \label{eq:Schroedinger}
    i\frac{\partial}{\partial t}\psi=H\psi,
\end{align}
where the Dirac Hamiltonian is given by
\begin{align}
    \label{eq:Dirac_Hamiltonian}
    H=\bm{\alpha}\cdot(-i\nabla+e\bm{A})+\beta m-e\phi.
\end{align}
The Dirac matrices 
\begin{align}
    \begin{split}
        \alpha^i&= \sigma^1\otimes\sigma^i,\\
        \beta&=\sigma^3\otimes\mathbb{I}_2,
    \end{split}
\end{align}
arise from the factorization of the relativistic dispersion relation $E=\sqrt{\bm{p}^2+m^2}$ and replace $\gamma^\mu$. Note that we consider the standard Dirac representation. For the superscript $i=1,2,3$, the set of $2\times2$ Pauli spin matrices is given by
\begin{alignat}{3}
    \sigma^1=&  &{}\ket{0}\bra{1}{}&{}+{}& &\ket{1}\bra{0},\nonumber\\
    \sigma^2=&-i&{}\ket{0}\bra{1}{}&{}+{}&i&\ket{1}\bra{0},\\
    \sigma^3=&  &{}\ket{0}\bra{0}{}&{}-{}& &\ket{1}\bra{1}.\nonumber
\end{alignat}
In addition, $\bm{A}$ and $\phi$ act for potentials. While we can add the external potential $\bm{A}$ by minimal coupling, internal potential can be defined as either a zeroth component of $A_\mu$, scalar $\sim \phi$, or pseudoscalar $\sim\gamma^0\gamma^iV$ term~\cite{PhysRevA.59.1056}. We consider it as a scalar term $\phi$ hereafter.

By solving~(\ref{eq:Schroedinger}) for a propagation time $t$, one obtains the formal solution,
\begin{align}
    \label{eq:psi}
    \psi(t) = \mathcal{T}\left[e^{-i\int^t_0 H(s) d s}\right] \psi(0),
\end{align}
where $\mathcal{T}$ is the time-ordering operator.
Since~(\ref{eq:Dirac_Hamiltonian}) contains non-commuting terms, we approximate the evolution generated by the Hamiltonian by the $2k$th-order product formula~\cite{Childs2022quantumsimulationof},
with
\begin{equation}
    \begin{aligned}
        U_2(t) &\coloneqq e^{-i\frac{t}{2} H_1 } \cdot e^{-i\frac{t}{2} H_2 } \cdot e^{-i\frac{t}{2} H_3 } \cdot e^{-i\frac{t}{2} H_4 } \\ 
        & \blank e^{-i\frac{t}{2} H_4 } \cdot e^{-i\frac{t}{2} H_3 } \cdot e^{-i\frac{t}{2} H_2 } \cdot e^{-i\frac{t}{2} H_1 }, \\
        U_{2k}(t) &\coloneqq U_{2k-2}(u_kt)^2U_{2k-2}((1-4u_k)t)U_{2k-2}(u_kt)^2,
    \end{aligned}
\end{equation}
where
\begin{align}\label{eq:dtqw:diracham}
    \begin{split}
        H_1&=-i\bm{\alpha}\cdot\nabla, \\
        H_2&=\beta m, \\
        H_3&=-e\phi, \\
        H_4&=e\bm{\alpha}\cdot \bm{A},
    \end{split}
\end{align}
with $u_k\coloneqq1/(1-4^{1/(2k-1)})$. The non-commuting error decreases as the order increases. To approximate the time evolution using the $2k$th-order product formula with an error no greater than $\varepsilon/2$, the number of exponentials required can be determined using the following expression, 
\begin{align}
    \widetilde{\mathcal{O}}\Bigl( 5^{2k} (\norm{H}T)^{1+1/2k}/\varepsilon^{1/2k} \Bigr),
\end{align}
where $\norm{\cdot}$ denotes the spectral norm, and $T$ is the total evolution time. Note that the notation $\widetilde{\mathcal{O}}$ omits the poly-logarithmic complexity i.e. $\widetilde{\mathcal{O}}(g)=\mathcal{O}(g\poly(\log g))$. The detailed analysis can be found in Ref.~\cite{PhysRevX.11.011020}.
We finally obtain the discrete-time evolution by breaking the total evolution time $T$ into $r$ steps, each of size $t$,
\begin{align}
    U(T) \approx \left[ U_{2k}(t) \right]^r, \quad \mathrm{with\ }\quad r = T/t.
\end{align}

\subsection{Space discretization}
\label{sec:dtqw:Space discretization}
Having described how time can be discretized in the Schr\"{o}dinger picture, we now turn our attention to the discretization of space.
Furthermore, the Dirac spinor $\psi$ must be discretized to store quantum states on a digital computer.
Unlike continuous variables, discrete variables will have a one-to-one correspondence with the integers $\mathbb{Z}$. Here, we define $n$ as the number of grid points corresponding to the resolution of space in one dimension. Please note that $n$ is not the number of qubits. Since we consider the Dirac spinor $\psi$ in 3+1 dimensions as the subject of discretization, it has $4n^3$ grid points. Given $P_0$ type elements, piecewise constant discontinuous finite elements~\cite{LORIN2011190}, the finite volumes become equally sized cubes, and we obtain the discrete quantum states as tensor products of one-dimensional states expressed as
\begin{align}
    \psi=\sum^4_{a=1}\sum^n_{j=1}\sum^n_{k=1}\sum^n_{\ell=1}\alpha_{a,j,k,\ell}\ket{a}\ket{j}\ket{k}\ket{\ell},
\end{align}
with the normalization condition
\begin{align}
    \int d^3\bm{x}\psi^\dagger\psi=1.
\end{align}
Accordingly, we denote the position and the momentum grids with the computational cell volume $\Omega$ as
\begin{equation}
    r_p = \frac{p\,\Omega}{n}, \quad\mathrm{and\ }\quad k_p =\frac{2\pi p}{\Omega},
\end{equation}
where $p \in [-n/2, n/2) \subset\mathbb{Z}$.

Once the position and momentum grids are set, Courant--Friedrichs--Lewy (CFL) condition~\cite{10.5555/2430727} determines the time step size $t$,
\begin{align}
    t=\frac{n^\ast\Omega}{n}, \quad\mathrm{with\ }\quad n^\ast\in\frac{1}{2}\mathbb{N},
\end{align}
where $n^\ast\in\{1/2,1,3/2,2\ldots\}$ can be any (half-)integer. From a practical perspective, one would like to choose the smallest (half-)integer, known as the staggered mesh, so that the number of recursions becomes minimum.

\section{Discrete-time quantum walk}
\label{sec:dtqw:discrete-time quantum walk}

\begin{figure*}[t]
    \centering
    \includegraphics[width=\textwidth]{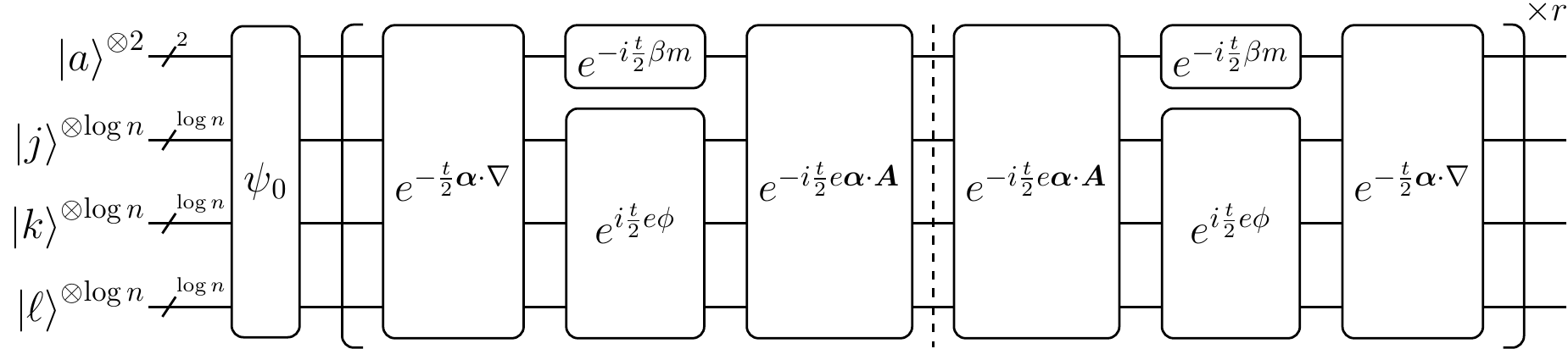}
    \caption{Quantum circuit for simulating the time-dependent Dirac equation in 3+1 dimensions using discrete-time quantum walks. The circuit consists of an initial state preparation block and $r$ repeating blocks that evaluate the time-splitting operators in the 2nd-order.}
    \label{fig:dtqw:overview}
\end{figure*}

This section shows the quantum algorithms for simulating the 3+1 dimensional time-dependent Dirac equation defined in \sec{dtqw:the Dirac equation}. As discussed in earlier studies, including~\cite{shikano2013discrete}, there are two distinct variations of quantum walk models: continuous-time and discrete-time quantum walks. While the former model quantizes the continuous-time Markov chain, the latter is based on the discrete-time Markov chain~\cite{Childs2010-dv}. Depending on the continuity, the models converge to the Schr\"{o}dinger or the Dirac equation~\cite{PhysRevA.73.054302}. We deal with the discrete-time quantum walk to our target of the time-dependent Dirac equation in 3+1 dimensions, giving the following result.

\begin{theorem}[Informal version of \thm{Theorem}]
    Consider the discrete-time quantum walk on a cycle with the relativistic Hamiltonian defined in (\ref{eq:Dirac_Hamiltonian}). Then for any $t\in\mathbb{R}$ (propagation time) and $\varepsilon$ (error tolerance) there exists a unitary operation $e^{-itH}$, which can be implemented on quantum computers with gate complexity
    \begin{align}
        \widetilde{\mathcal{O}}\Bigl( 5^{2k} (\norm{\phi}_{\max}+\norm{\bm{A}}_{\max})^{1+1/2k}T^{1+1/2k}/\varepsilon^{1/2k} \Bigr).
    \end{align}
\end{theorem}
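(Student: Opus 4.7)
The plan is to combine the $2k$th-order Trotter decomposition fixed in \sec{dtqw:Time discretization} with a direct gate-level accounting of each of the four elementary exponentials $e^{-i\tau H_j}$ that appear in $U_{2k}$. The argument naturally splits into three stages: choose the Trotter step so that the product-formula error is at most $\varepsilon/2$; compile each $e^{-i\tau H_j}$ with few elementary gates; and aggregate the cost over all steps. The first stage is already provided by the standard $2k$th-order Suzuki error bound cited in the excerpt, which, for a given $\|H\|$, makes $\widetilde{\mathcal{O}}(5^{2k}(\|H\|T)^{1+1/2k}/\varepsilon^{1/2k})$ total exponentials suffice (the $5^{2k}$ arising from multiplying the $\mathcal{O}(5^{k-1})$ base blocks per Trotter step by the number of Trotter steps). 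With that count in hand, the claim reduces to (i)~compiling each $e^{-i\tau H_j}$ with $\widetilde{\mathcal{O}}(1)$ gates, and (ii)~replacing the generic $\|H\|$ by $\|\phi\|_{\max}+\|\bm{A}\|_{\max}$ up to polylogarithmic factors.

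For step (i) I would treat the four Hamiltonian pieces separately. The kinetic term $H_1=-i\bm{\alpha}\cdot\nabla$ is diagonalised in the momentum basis by a quantum Fourier transform on each of the three spatial registers, after which $e^{-i\tau H_1}$ becomes a spinor rotation controlled by the momentum grid values; this costs $\poly(\log n)$ gates and, crucially, contains no multi-controlled CNOT gates. The mass term $H_2=\beta m$ is a fixed $4\times 4$ phase acting only on the spinor register, so $e^{-i\tau H_2}$ is $\mathcal{O}(1)$. The scalar-potential factor $e^{i\tau e\phi}$ is diagonal in the position basis and compiled via the position-dependent phase construction already referenced in \sec{dtqw:discrete-time quantum walk}; the vector-potential factor $e^{-i\tau e\bm{\alpha}\cdot\bm{A}}$ is implemented analogously as a position-dependent coin acting on the spinor register, since for each grid point $\bm{\alpha}\cdot\bm{A}(\bm{r}_p)$ is a fixed Hermitian $4\times 4$ matrix. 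Both cost $\poly(\log n)$, so the per-exponential cost collapses into $\widetilde{\mathcal{O}}(1)$ as required.

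For step (ii) I would bound $\|H\|\le\|H_1\|+\|H_2\|+\|H_3\|+\|H_4\|$ with $\|H_3\|=e\|\phi\|_{\max}$ and $\|H_4\|\le e\|\bm{A}\|_{\max}$, and then absorb $\|H_2\|=m$ and $\|H_1\|$ into the $\widetilde{\mathcal{O}}$ as problem- and grid-determined quantities rescaled by polylogarithmic factors in $n$. The main obstacle I anticipate is precisely the $H_1$ bound: on a bare grid, $\|H_1\|$ scales like the largest momentum $\sim n$, which if inserted literally would convert the final complexity into a polynomial dependence on $n$ and spoil the advertised potential-only scaling. The resolution should come from the spectral-method philosophy underlying the algorithm: for the smooth solutions targeted in this work, the physically relevant evolution is confined to a low-momentum band of effective bandwidth $\widetilde{\mathcal{O}}(1)$, so it suffices to bound $\|H_1\|$ on that band. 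Formalising this reduction---either through a truncation argument on the momentum register, or through the commutator-scaling refinement of the Trotter error in which $H_1$ only enters through commutators with the other pieces---is the step that ultimately justifies replacing $\|H\|$ by $\|\phi\|_{\max}+\|\bm{A}\|_{\max}$ in the stated bound.
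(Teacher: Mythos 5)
Your proposal follows essentially the same route as the paper: a Suzuki $2k$th-order exponential count of $\widetilde{\mathcal{O}}\bigl(5^{2k}(\norm{H}T)^{1+1/2k}/\varepsilon^{1/2k}\bigr)$ (the paper invokes the Berry--Ahokas--Cleve--Sanders bound with $m=7$ terms), per-exponential compilation exactly as in Lemmas~1--4 (QFT-diagonalized kinetic term, single $R_z$ mass coin, Walsh-operator scalar and vector potentials), and termwise norm bounds $\norm{H_3}=e\norm{\phi}_{\max}$, $\norm{H_4}=e\norm{\bm{A}}_{\max}$. The issue you flag about $\norm{H_1}\sim n$ is handled in the paper the way you anticipated, not by a commutator refinement: the bound is written with $n\pi/\Omega$ included and then absorbed into $\widetilde{\mathcal{O}}$ because the spectral method needs only $n=\poly(\log(1/\varepsilon))$ grid points for smooth solutions, so your argument is in substance the paper's own.
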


The starting point of the implementation is constructing single-step time evolution operators. Hereafter, we employ the 1st-order product formula to implement the kinetic term and the vector potential term, which contain non-commuting terms. On the other hand, we consider the 2nd-order scheme for the total evolution because the 3+1 dimensional Dirac equation consists of 7 non-commuting terms. \fig{dtqw:overview} depicts the quantum circuit for simulating the time-dependent Dirac equation with the 2nd-order product formula. In each step, the initial quantum state $\psi_0$ is evolved by conditional shift (kinetic term $e^{-\frac{t}{2}\bm{\alpha}\cdot\nabla}$), coins (mass term $e^{-i\frac{t}{2}\beta m}$ and vector term $e^{-i\frac{t}{2}e\bm{\alpha}\cdot \bm{A}}$), and phase (scalar term $e^{i\frac{t}{2}e\phi}$). We apply the same operations but in reversed order and obtain the desired state in the 2nd order after the $r$ time steps. The following subsections explain the quantum circuit implementation of each time evolution operator.

\subsection{Shift operator as the kinetic term} \label{sec:kinetic_term}
\begin{figure*}[t]
    \centering
    \subfloat[]{\includegraphics[width=\textwidth]{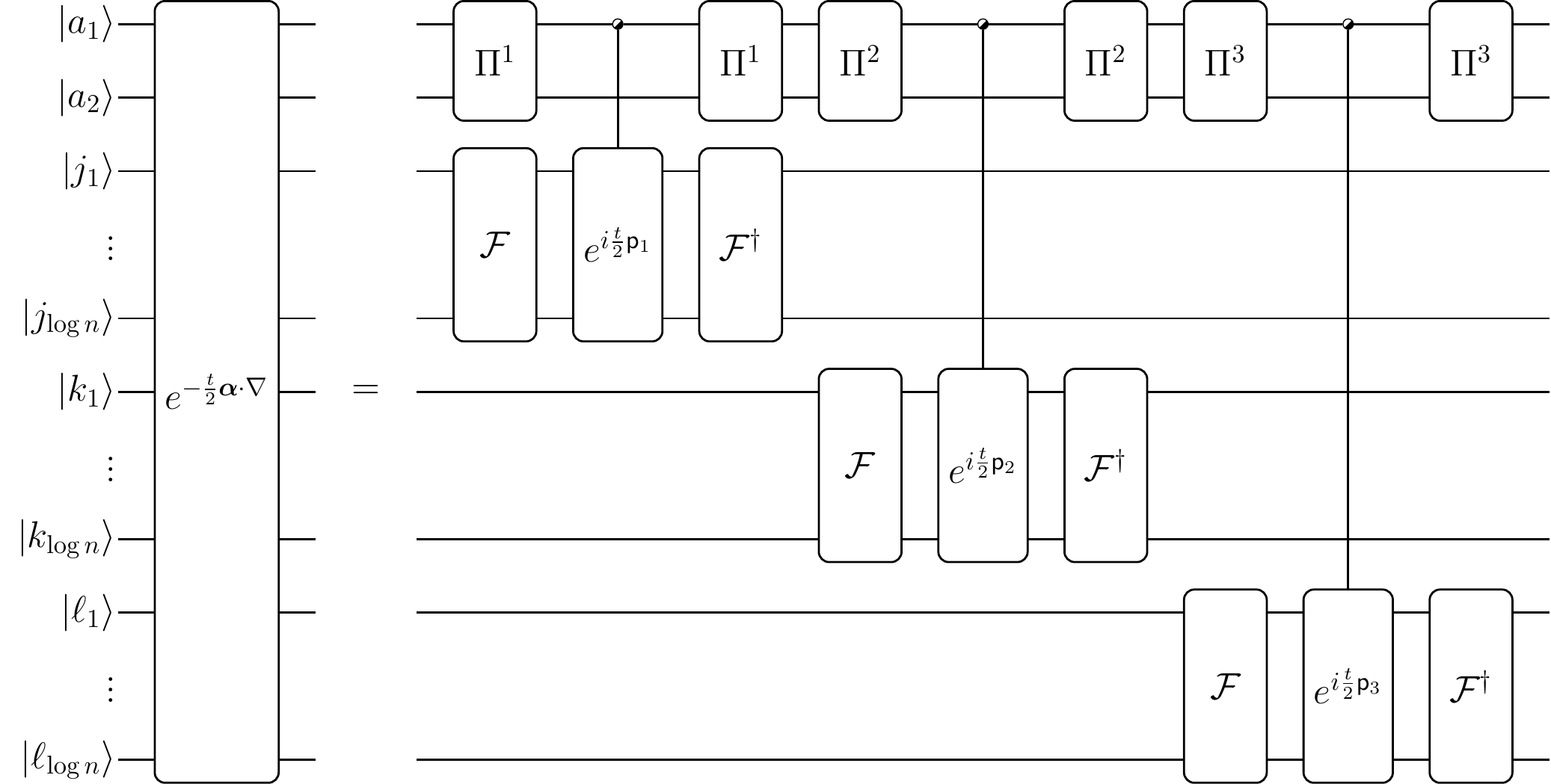}} \\
    \subfloat[]{\includegraphics[width=\textwidth]{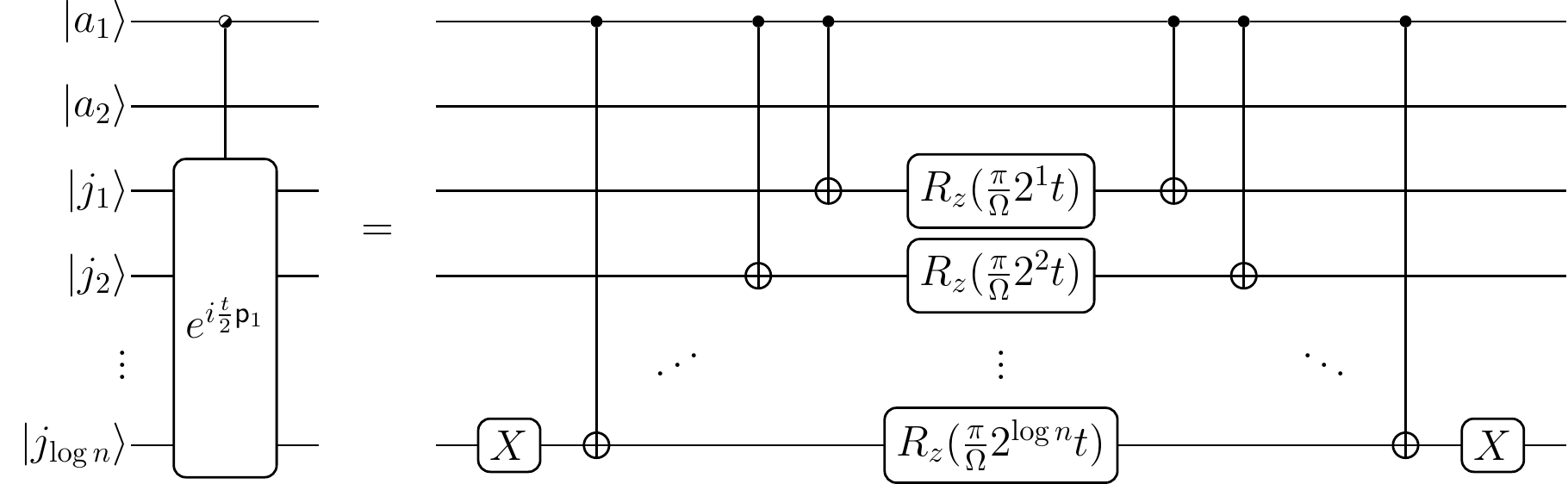}}
    \caption{(a) Quantum circuit for implementing the kinetic term of the time-dependent Dirac equation. We apply the set of $\Pi^i$ gates and quantum Fourier transforms $\mathcal{F}$ pair to efficiently implement remaining diagonal terms in the $\beta$ basis. (b) Schematics for implementing the diagonal terms. The sequence of $R_z$ gates serves as the momentum operator, and the CNOT pairs project it to the spinor space. We place single-qubit $X$ gates at the beginning and end of the sequence so that it shifts the momentum spectra.}
    \label{fig:dtqw:linear}
\end{figure*}
This subsection explains the quantum circuit implementation for the shift operator of the discrete-time quantum walk. Since the shift operation is the classical counterpart of the kinetic term~\cite{PhysRevA.81.062340}, it conditionally moves a Dirac particle to the left and right in 1+1 dimensions. The shift operation has been implemented using multi-controlled CNOT gates with stair-like configuration in the position space~\cite{Fillion_Gourdeau_2017,Acasiete_2020,Qiang_2016}. The scheme takes advantage of the kinetic term that becomes the conditional shift matrices when a time step does not exist. However, to implement the kinetic term with arbitrary time steps, we propose an implementation based on quantum spectral methods, in which we evaluate the term in the Fourier space.
\begin{lemma}\label{lem:lemma1}
    For any $t\in\mathbb{R}$, the quantum operation $e^{-\frac{t}{2}\bm{\alpha}\cdot\nabla}$ can be approximated within error tolerance $\varepsilon$ and with gate complexity $\mathcal{O}(\log{n}\log\log{n})$.
\end{lemma}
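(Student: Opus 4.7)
The plan is to implement each one-axis factor of $e^{-\frac{t}{2}\bm{\alpha}\cdot\nabla}$ in the Fourier basis, where $-i\partial_i$ becomes multiplication by the momentum grid $k_p=2\pi p/\Omega$, and the Dirac matrix $\alpha^i$ can be diagonalised by a constant-size Clifford in the spinor register so that only a tensor product of one-qubit phases remains. First I would invoke a first-order Trotter splitting to write $e^{-\frac{t}{2}\bm{\alpha}\cdot\nabla}\approx\prod_{i=1}^{3}e^{-\frac{t}{2}\alpha^i\partial_i}$; the three summands pairwise anticommute, but only the spectral norms of their commutators enter the standard Lie--Trotter bound, and on the grid these are controllable and can be folded into the total error budget together with the QFT and phase-synthesis errors.

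For a single factor $e^{-\frac{t}{2}\alpha^i\partial_i}$ the circuit has the three stages depicted in \fig{dtqw:linear}. Stage (i): a constant-size Clifford $\Pi^i$ conjugates $\alpha^i=\sigma^1\otimes\sigma^i$ into a diagonal form in the $\beta$-basis, so that its eigenvalues $\pm1$ are read from the leading spinor qubit. Stage (ii): an approximate QFT $\mathcal{F}$ on the $i$th spatial register (of $\lceil\log n\rceil$ qubits) diagonalises $-i\partial_i$ with eigenvalues $k_p$; since $k_p$ is linear in the integer label $p$, and the centring of $p$ onto $[-n/2,n/2)$ is achieved by an $X$ gate at the two ends of the ladder, the diagonal unitary factorises into a product of single-qubit $R_z$ rotations with angles proportional to $t\cdot 2^{j}/\Omega$, one per spatial qubit, each sign-controlled by a CNOT from the diagonalised spinor qubit. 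Stage (iii): undo $\mathcal{F}$ and $\Pi^i$ to return to the computational basis.

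Counting gates, $\Pi^i$ and the CNOT/$R_z$ ladder cost $\mathcal{O}(\log n)$, while an approximate QFT to inverse-polynomial precision, truncated at $\mathcal{O}(\log\log n)$ controlled-phase levels, costs $\mathcal{O}(\log n \log\log n)$ and dominates; summing over $i=1,2,3$ preserves this scaling and yields the claimed bound. The main obstacle is the joint error accounting: I need to combine (a) the axis-Trotter error, which after Fourier analysis is bounded by $\mathcal{O}(t^2\|\partial_i\partial_j\|)$, (b) the truncation error of the approximate QFT, and (c) the phase-synthesis error of each $R_z$ rotation, into a single bound $\varepsilon$ without inflating the asymptotic cost. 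The key observation is that on the lattice $\|\partial_i\|\le k_{\max}=\pi n/\Omega$, so the CFL-consistent choice $t\propto\Omega/n$ from \sec{dtqw:Space discretization} keeps the Trotter contribution uniformly bounded, leaving the QFT truncation to dictate the dominant $\log\log n$ factor.
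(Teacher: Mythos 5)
Your proposal follows the same route as the paper's proof: a first-order product-formula split into the three one-dimensional kinetic factors, conjugation by the constant-size $\Pi^i$ to diagonalise $\alpha^i$ in the $\beta$ basis, a QFT pair to diagonalise the derivative as a circulant, an $R_z$ ladder with spinor-controlled CNOTs for the linear momentum phase, $X$ gates to centre the spectrum, and the (approximate) QFT at $\mathcal{O}(\log n\log\log n)$ dominating the count. The only difference is that you spell out the joint error budget (Trotter, QFT truncation, rotation synthesis) somewhat more explicitly than the paper does, which is a refinement rather than a different argument.
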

\begin{proof}
    We decompose the three-dimensional kinetic term via the product formula, as each direction does not commute. Decomposed kinetic terms introduce non-commuting error up to $\varepsilon$ in the 1st-order,
    \begin{align}
        \label{eq:Trotterized_Kinetic}
        e^{-\frac{t}{2}\bm{\alpha}\cdot\nabla} = e^{-\frac{t}{2}\alpha^3\cdot\partial_3}\cdot e^{-\frac{t}{2}\alpha^2\cdot\partial_2}\cdot e^{-\frac{t}{2}\alpha^1\cdot\partial_1} + \mathcal{O}(t^2/r^2).
    \end{align}
    Note that we can take $r=\mathcal{O}((\norm{H}t)^2/\varepsilon)$ to ensure the error at most $\varepsilon$. The right-hand side of~(\ref{eq:Trotterized_Kinetic}) contains three one-dimensional kinetic terms. From now on, we take the $x$ direction $e^{-\frac{t}{2}\alpha^1\cdot\partial_1}$ for instance, but it is defined in the same way for $i=2,3$. We then perform the eigenvalue decomposition of the Dirac matrices $\alpha^1$ and circulant matrices $e^{-t\partial_1}$ to implement them in the computational basis. We define the following unitary operators for the decomposition of Dirac matrices~\cite{Fillion_Gourdeau_2017},
    \begin{align}
        \label{eq:Pi_Matrix}
        \Pi^1=\frac{1}{\sqrt{2}}(\beta+\alpha^1).
    \end{align}
    Moreover, the circulant matrices can be diagonalized by a Fourier transform pair~\cite{davis1979circulant}. Thus we decompose the overall three-dimensional kinetic terms $e^{-\frac{t}{2}\alpha^1\cdot\partial_1}$ using the identities,
    \begin{align}
        e^{-\frac{t}{2}\alpha^1\cdot\partial_1}=(\Pi^1\otimes\mathcal{F}^\dagger)\cdot e^{i\frac{t}{2}\beta\cdot\mathsf{p}_1}\cdot(\Pi^1\otimes\mathcal{F}),
    \end{align}
    where $\mathsf{p}_1\coloneqq -i\partial/\partial x$ is the momentum operator, and $\mathcal{F}$ is defined as the quantum Fourier transform, which has gate complexity $\mathcal{O}(\log{n}\log\log{n})$~\cite{Childs2021highprecision}.
    
    We now represent the momentum operator $\mathsf{p}_1$ with a set of quantum gates. Recall that the momentum operator acts linearly. We can rewrite the operator with $\mathcal{O}(\log{n})$ sequences of $Z$ gates as follows~\cite{nielsen_chuang_2010},
    \begin{align}
        e^{i\frac{t}{2}\mathsf{p}_1}=\prod_{j=1}^{\log{n}} e^{-i2^j\pi tZ_j/\Omega}.
    \end{align}
    
    The final step is to realize the conditional shift. Since the decomposed kinetic term $e^{i\frac{t}{2}\beta\cdot\mathsf{p}_1}$ contains $\beta$ matrices concerning the large and small components of a Dirac spinor. An efficient way to impose the conditional operation is to deploy the CNOT gates before and after the momentum operations.
    
    The explicit quantum circuit implementation for the decomposed kinetic term can be seen in \fig{dtqw:linear}. To derive the time-evolution operator, we exponentiated the momentum operator, which consists of $R_z$ rotations and CNOT gates. The $\mathcal{O}(\log{n})$ CNOT gates were used to interpose single qubit rotations and add degrees of freedom~\cite{Zhou2011}. It should be noted that we shifted the spectrum of the momentum operator to the center with two single $X$ gates at the beginning and end of the shift operator~\cite{PhysRevLett.125.260511}. This leads to the gate complexity $\mathcal{O}(\log{n}\log\log{n})$.
    \end{proof}

\subsection{Coin operator as the mass term}
This subsection explains the implementation of the mass term.
\begin{lemma}\label{lem:lemma2}
    For any $t\in\mathbb{R}$, the quantum operation $e^{-i\frac{t}{2}\beta m}$ can be performed with gate complexity $\mathcal{O}(1)$.
\end{lemma}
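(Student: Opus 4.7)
The plan is to exploit the tensor-product structure of $\beta$ inherited from its definition $\beta = \sigma^3 \otimes \mathbb{I}_2$. Because the two factors act on distinct qubits of the spinor register and $\mathbb{I}_2$ commutes with everything, the exponential factorizes cleanly,
\begin{align}
    e^{-i\frac{t}{2}\beta m} \;=\; e^{-i\frac{tm}{2}\,\sigma^3 \otimes \mathbb{I}_2} \;=\; e^{-i\frac{tm}{2}\sigma^3} \otimes \mathbb{I}_2,
\end{align}
so the operation is trivial on the second spinor qubit. The first step of the argument is simply to record this factorization.

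The second step is to identify $e^{-i\frac{tm}{2}\sigma^3}$ with a standard single-qubit $z$-rotation $R_z(tm)$, which by definition is implementable with a single elementary gate from any reasonable universal gate set. Since $m$ is a fixed (time-independent) parameter of the Hamiltonian and the operator depends only on the spinor qubit, no position register is involved and no data-dependent control is required; in particular, the cost is independent of the space resolution $n$. Hence the total implementation uses exactly one single-qubit gate applied to the first spinor qubit, yielding gate complexity $\mathcal{O}(1)$.

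There is no real obstacle here: the only thing to check is that $\beta$ has the claimed tensor structure in the standard Dirac representation used throughout the paper, which was already fixed in~(\ref{eq:Dirac_Hamiltonian}) and the surrounding equations. Unlike the kinetic term treated in \lem{lemma1}, there is no non-commutativity across spatial directions and no circulant/Fourier machinery to invoke, so no Trotter error is incurred and the bound is exact rather than approximate.
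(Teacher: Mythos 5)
Your proposal is correct and follows essentially the same route as the paper: both exploit $\beta=\sigma^3\otimes\mathbb{I}_2$ to reduce the mass term to a single $R_z(tm)$ rotation on the spinor (coin/auxiliary) qubit, independent of the position register, giving $\mathcal{O}(1)$ gate complexity. Your version just makes the tensor-product factorization explicit, which the paper states more informally.
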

\begin{proof}
    The mass term can be seen as a coin operator that only acts on an auxiliary qubit,
    \begin{align}
        e^{-i\frac{t}{2}\beta m}=e^{-i\frac{t}{2}mZ}.
    \end{align}
    This operation rotates the quantum states around the $Z$ axis. The term only contains the mass, a constant, with a Dirac matrix $\beta$. Thus, the mass term only contributes to the most significant qubit using $R_z$ gate~\cite{Fillion_Gourdeau_2017},
    \begin{equation}
        \begin{minipage}[c]{.4\textwidth}
            \centering
            \begin{quantikz}[draw]
                \lstick{$e^{-i\frac{t}{2}mZ}\equiv\ket{a_1}$} & \gate{R_z(tm)} & \qw
            \end{quantikz}.
        \end{minipage}
    \end{equation}
    While it is a single-qubit rotation, it runs on all the qubits simultaneously. This is because when the rotation is applied to the auxiliary register $\ket{a}$, the rotation is applied to all the quantum registers.
\end{proof}

\subsection{Position-dependent phase operator as the scalar potential term}
This subsection explains the implementation of the space-dependent scalar potential term. Before constructing specific quantum circuits corresponding to the scalar (or vector) potential terms, we briefly review the implementation of a diagonal unitary operator $e^{if}$ with a space-dependent function $f(x)$.
\begin{proposition}[{\cite{welch_2014}}]
    For a $n\times n$ diagonal unitary operator $U=e^{if}$ with a space-dependent infinitely differentiable function $f$, there exists a quantum operation that approximates $U$ with gate complexity $\mathcal{O}(\poly(\log(1/\varepsilon)))$.
\end{proposition}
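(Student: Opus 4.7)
The plan is to follow the Walsh-series construction of Welch \emph{et al.} Since $U=e^{if}$ acts diagonally on $n$ basis states, I identify the diagonal with a real function on $\{0,1\}^{\log n}$ and expand it in the orthonormal basis of Walsh functions $\{w_S\}_{S\subseteq[\log n]}$, so that $f=\sum_S c_S\, w_S$. Each Walsh function has the form $w_S=\prod_{j\in S}(-1)^{x_j}$, so $e^{ic_S w_S}$ is a multi-qubit $Z$-rotation that can be realized by a standard CNOT-ladder surrounding a single $R_z(2c_S)$ rotation acting on $|S|$ qubits, at a cost of $O(|S|)=O(\log n)$ two-qubit gates. Thus the full diagonal unitary decomposes as a product over $S$ of such phase gadgets, and the implementation cost reduces to bounding how many terms must actually be kept.

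The key analytical step is to bound the number of retained Walsh coefficients needed to hit error tolerance $\varepsilon$. I would route through the Fourier series of a smooth periodic extension of $f$: because $f$ is infinitely differentiable, its Fourier coefficients decay faster than any inverse polynomial in the wavenumber, and the Walsh coefficients inherit the same super-polynomial decay. Truncating to coefficients of ``frequency index'' at most $M$ therefore produces an approximation $\tilde f$ with $\|f-\tilde f\|_\infty=\mathcal{O}(M^{-p})$ for every $p$. Using the elementary bound $\|e^{if}-e^{i\tilde f}\|\le\|f-\tilde f\|_\infty$, choosing $M=\poly(\log(1/\varepsilon))$ suffices. Combining with the per-term cost yields total gate count $\poly(\log(1/\varepsilon))\cdot O(\log n)$, which collapses into the claimed $\mathcal{O}(\poly(\log(1/\varepsilon)))$ scaling once the $\log n$ factor is absorbed into the polylogarithm.

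The main obstacle I anticipate is the smoothness-to-decay step: one must quantitatively convert the $C^\infty$ hypothesis on $f$ into rapid decay of the Walsh-series tail, and track constants carefully so that the number of retained terms is genuinely polylogarithmic rather than, say, inverse polynomial in $\varepsilon$. A careful treatment goes through Paley--Wiener style estimates for the Fourier coefficients of the smooth periodic extension of $f$ and then converts back to the Walsh basis via the standard change-of-basis; this is largely bookkeeping but is where all the quantitative content of the proposition resides. Once that estimate is in hand, assembling the ladder circuits and verifying the final gate count is routine.
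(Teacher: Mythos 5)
Your circuit-level construction (Walsh phase gadgets built from a CNOT ladder around a single $R_z$) matches Welch \emph{et al.} and the paper, but your key analytical step is not correct: Walsh coefficients of a smooth function do \emph{not} inherit the super-polynomial decay of its trigonometric Fourier coefficients. The Walsh functions are discontinuous, and the partial sum over the first $2^m$ Walsh functions is exactly the piecewise-constant average of $f$ over dyadic intervals of length $2^{-m}$; for a generic nonconstant $C^\infty$ function (already for $f(x)=\sin 2\pi x$) the sup-norm error of that truncation is $\Theta(2^{-m})$, i.e.\ the coefficients decay only like $1/k$ and you need $\Theta(1/\varepsilon)$ retained terms. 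This is precisely the $\poly(1/\varepsilon)$ scaling proved by Welch \emph{et al.}, so the ``Paley--Wiener estimate plus change of basis'' step you flag as bookkeeping is in fact where the argument breaks: no choice of $M=\poly(\log(1/\varepsilon))$ frequencies can work in the Walsh basis, and your route would only deliver $\mathcal{O}(\poly(1/\varepsilon))$ gates.

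The paper obtains the polylogarithmic bound by a different mechanism, which decouples the two uses of $\varepsilon$. Smoothness is used only at the discretization stage: because the solution (and potential) is infinitely differentiable, the spectral method needs only $n=\poly(\log(1/\varepsilon))$ grid points to reach accuracy $\varepsilon$, so the diagonal unitary to be synthesized acts on just $\log_2 n$ qubits. On that register the discrete Walsh decomposition is \emph{exact} with at most $n$ Walsh operators (no truncation or decay argument is needed), each costing $\mathcal{O}(\log n)$ gates, giving $\mathcal{O}(\poly(\log(1/\varepsilon)))$ in total. If you want to salvage your write-up, replace the coefficient-decay argument with this two-step logic: (i) spectral convergence bounds the register size $n$ in terms of $\varepsilon$; (ii) Welch's exact Walsh synthesis bounds the gate count in terms of $n$.
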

\begin{proof}
    To approximate the diagonal unitary operator up to $\varepsilon$, only $n=\poly(1/\varepsilon)$ Walsh operators need to be implemented~\cite{welch_2014}, resulting in a gate complexity of one-qubit and two-qubit quantum gates of $\mathcal{O}(\poly(1/\varepsilon))$. Moreover, suppose the solution belongs to the class of infinitely differentiable functions. In that case, the spectral method can approximate the solution to an error of $\varepsilon$ using only a polynomial number of grid points, specifically $n=\poly(\log(1/\varepsilon))$~\cite{10.5555/2073480}. This leads to the gate complexity $\mathcal{O}(\poly(\log(1/\varepsilon)))$.
\end{proof}
We now provide the quantum circuit for the space-dependent scalar potential term.
\begin{lemma}\label{lem:lemma3}
    For any $t\in\mathbb{R}$, the quantum operation $e^{i\frac{t}{2}e\phi}$ can be approximated within error tolerance $\varepsilon$ and with gate complexity $\mathcal{O}(\poly(\log(1/\varepsilon)))$.
\end{lemma}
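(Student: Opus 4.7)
The plan is to reduce the statement directly to the diagonal-unitary proposition (Welch) already cited, by recognizing that the scalar potential term $e^{i\frac{t}{2}e\phi}$ is diagonal in the position basis and acts trivially on the spinor register. First I would write $\phi = \phi(\mathbf{x})$ as a multiplication operator on $L^2(\mathbb{R}^3)$, so that after space discretization onto the $n^3$-point grid it becomes a diagonal matrix $\mathrm{diag}(e^{i\frac{t}{2}e\phi(\mathbf{r}_{j,k,\ell})})$ indexed by the three position registers $\ket{j}\ket{k}\ket{\ell}$. The spinor register $\ket{a}$ is untouched, so the cost is independent of the four-component structure, which immediately sets up the problem in the exact form assumed by Proposition 1.

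Next I would invoke Proposition 1 with the smooth phase function $f(\mathbf{x}) = \tfrac{t}{2}e\phi(\mathbf{x})$. This gives a Walsh-series approximation using $N = \poly(1/\varepsilon)$ Walsh operators, each implementable with $\mathcal{O}(1)$ one- and two-qubit gates, hence $\mathcal{O}(\poly(1/\varepsilon))$ gates in general. The key improvement comes from the assumption that $\phi$ is infinitely differentiable, which is consistent with the spectral framework adopted throughout \sec{dtqw:Space discretization}: under this smoothness assumption the required number of grid points needed to resolve $\phi$ to accuracy $\varepsilon$ is only $n = \poly(\log(1/\varepsilon))$, as guaranteed by spectral convergence. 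Substituting this grid size back into the Walsh-operator count yields the claimed $\mathcal{O}(\poly(\log(1/\varepsilon)))$ gate complexity.

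The main obstacle I expect is bookkeeping two distinct sources of error and making sure they compose correctly: (i) the truncation error of the Walsh series, and (ii) the discretization error arising from sampling $\phi$ on a finite grid. I would pick both tolerances to be $\Theta(\varepsilon)$ so that the triangle inequality on operator norms of diagonal unitaries preserves the overall $\varepsilon$ bound. A secondary point to address is the three-dimensional structure: since a diagonal operator over three registers is itself diagonal in the tensor-product computational basis, the Walsh decomposition extends to 3D without any controlled-arithmetic overhead, and the gate count inherits only a constant-factor blow-up from the three spatial directions. Finally I would note that the argument is agnostic to the specific shape of $\phi$, so arbitrary smooth scalar potentials of U(1) gauge theory are covered, which is precisely the regime highlighted in the introduction.
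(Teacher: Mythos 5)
Your proposal is correct and follows essentially the same route as the paper: both reduce the scalar term to the cited proposition on diagonal unitaries (Walsh-operator implementation), absorb the constants $e$ and $t/2$ into the rotation angles, and use the smoothness of $\phi$ to bring the effective grid size, and hence the Walsh-operator count, down to $\poly(\log(1/\varepsilon))$. The only cosmetic difference is that the paper writes the phase as a product of three one-dimensional position-dependent factors, while you treat the full diagonal operator over the three position registers at once, which avoids any implicit separability assumption on $\phi$.
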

\begin{proof}
    Since the scalar potential term does not contain Dirac matrices, we can decompose it as the tensor products of position-dependent phase operators,
    \begin{align}
        e^{i\frac{t}{2}e\phi}=e^{i\frac{t}{2}e\phi_3} \cdot e^{i\frac{t}{2}e\phi_2} \cdot e^{i\frac{t}{2}e\phi_1}.
    \end{align}
    The phase operators are similar to the potential operator of the time-dependent Schr\"{o}dinger equation~\cite{benenti2008}. In this case, the electric charge $e$ and the step size $t$ are constant factors, and we can omit the term by rotation angle. Furthermore, we have proven that the remaining term $e^{i\phi_i}$ can be performed with the gate complexity $\mathcal{O}(\poly(\log(1/\varepsilon)))$ using Walsh operators.
\end{proof}

\subsection{Position-dependent coin operator as the vector potential term}
\begin{figure*}[t]
    \centering
    \includegraphics[width=\textwidth]{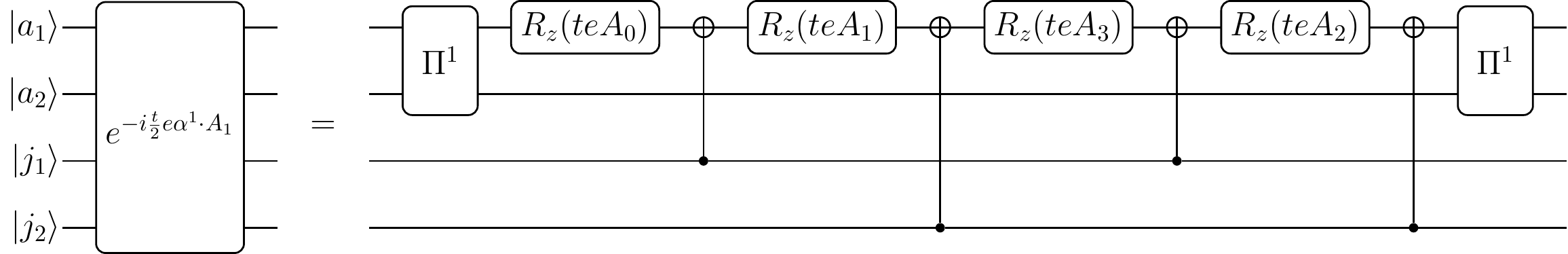}
    \caption{Quantum circuit for implementing an arbitrary vector potential. The sequence of $R_z$ and CNOT gates, ordered according to the Gray code, exactly implement the Walsh operator on two qubits and auxiliaries. $\Pi^1$ gates are applied to transform the $\beta$ basis to the Dirac basis $\alpha^1$ for the $x$ axis.}
    \label{fig:dtqw:vector}
\end{figure*}
This subsection explains the implementation of the space-dependent vector potential term.
\begin{lemma}\label{lem:lemma4}
    For any $t\in\mathbb{R}$, the quantum operation $e^{-i\frac{t}{2}e\bm{\alpha}\cdot\bm{A}}$ can be approximated within error tolerance $\varepsilon$ and with gate complexity $\mathcal{O}(\poly(\log(1/\varepsilon)))$.
\end{lemma}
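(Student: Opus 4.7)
The plan is to mirror the two-step strategy already used for the kinetic term (\lem{lemma1}) and the scalar potential (\lem{lemma3}): first reduce $\bm{\alpha}\cdot\bm{A}$ to a sum of $\beta$-type operators by a basis change, and then invoke the Walsh-operator implementation of diagonal unitaries. Concretely, since $\alpha^1$, $\alpha^2$, $\alpha^3$ do not mutually commute (and $\bm{A}$ in general has non-commuting components in position-dependent form as well), I would first Trotterize along the three spatial directions,
\begin{align}
    e^{-i\frac{t}{2}e\bm{\alpha}\cdot\bm{A}}
    = e^{-i\frac{t}{2}e\alpha^3 A_3}\cdot e^{-i\frac{t}{2}e\alpha^2 A_2}\cdot e^{-i\frac{t}{2}e\alpha^1 A_1} + \mathcal{O}(t^2/r^2),
\end{align}
so that each factor involves a single Dirac matrix $\alpha^i$ tensored with a multiplication operator $A_i(\bm{x})$. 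Choosing $r$ as in \lem{lemma1} keeps the product-formula error below $\varepsilon$.

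Next, for each direction $i$ I would use the same $\Pi^i$ similarity transformation introduced in (\ref{eq:Pi_Matrix}) to rotate $\alpha^i$ into $\beta$. Since $\Pi^i$ is unitary and acts only on the spin register, while $A_i$ acts only on the position register, this gives
\begin{align}
    e^{-i\frac{t}{2}e\alpha^i A_i}
    = \bigl(\Pi^i\otimes\mathbb{I}\bigr)\, e^{-i\frac{t}{2}e\beta\otimes A_i}\, \bigl(\Pi^i\otimes\mathbb{I}\bigr).
\end{align}
Because $\beta=\sigma^3\otimes\mathbb{I}_2$ is diagonal in the computational basis and $A_i(\bm{x})$ is diagonal in the position basis, the middle factor is a single diagonal unitary whose phase is the infinitely differentiable function $\pm \tfrac{t}{2}e A_i(\bm{x})$ conditioned on the sign eigenvalue of $\beta$. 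This is exactly the setting of Proposition~1: a diagonal unitary with a smooth phase on the position register, with a single control coming from the spin register.

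I would then apply Proposition~1 to each of the three diagonal factors, implementing each by a Gray-code-ordered sequence of $R_z$ rotations interspersed with CNOTs that realize the relevant Walsh operators, entangled with the $\beta$ control so that the two spinor branches pick up opposite-sign phases (this is precisely the structure shown in \fig{dtqw:vector}). Each such diagonal unitary costs $\mathcal{O}(\poly(\log(1/\varepsilon)))$ gates, and the surrounding $\Pi^i$ gates are $\mathcal{O}(1)$; summing the contributions over three directions and absorbing the Trotter error into the tolerance $\varepsilon$ preserves the claimed complexity. The main obstacle I anticipate is bookkeeping: one must verify that the smoothness assumption in Proposition~1 transfers to the branch-sign-weighted phase $\pm e A_i$, that the $\beta$-controlled structure can indeed be realized exactly by the Gray-code CNOT pattern advertised in the figure rather than requiring multi-controlled gates, and that the Trotter error from the directional splitting can be folded into $\varepsilon$ without inflating the $\poly(\log(1/\varepsilon))$ scaling.
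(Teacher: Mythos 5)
Your proposal matches the paper's proof essentially step for step: the same directional Trotterization with $r=\mathcal{O}((\norm{H}t)^2/\varepsilon)$, the same $\Pi^i$ conjugation reducing $\alpha^i$ to $\beta$, and the same appeal to the Walsh-operator construction (Proposition~1, realized by the Gray-code $R_z$/CNOT pattern of \fig{dtqw:vector}) for the resulting $\beta$-controlled diagonal phase. The extra bookkeeping you flag is handled in the paper only implicitly via the figure, so your version is, if anything, slightly more explicit, but the route is the same.
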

\begin{proof}
    We first decompose the three-dimensional vector potential term via the product formula, as the Dirac matrices $\alpha^i$ do not commute.
    This decomposition introduces an error $\varepsilon$ up to the 1st-order, similarly to the kinetic term in Sec.~\ref{sec:kinetic_term},
    \begin{multline}
        e^{-i\frac{t}{2}e\bm{\alpha}\cdot\bm{A}}
        = e^{-i\frac{t}{2}e\alpha^3\cdot A_3} \cdot e^{-i\frac{t}{2}e\alpha^2\cdot A_2} \cdot e^{-i\frac{t}{2}e\alpha^1\cdot A_1} \\
        +\mathcal{O}(t^2/r^2).
    \end{multline}
    Again, we can take $r=\mathcal{O}((\norm{H}t)^2/\varepsilon)$ to ensure the error stays at most $\varepsilon$. The vector potential term only differs in sign-flip and the existence of the Dirac matrices $\alpha^i$ from the scalar potential term. Thus we diagonalize the matrices by $\Pi$ matrices defined in~(\ref{eq:Pi_Matrix}), and we get a position-dependent coin operator,
    \begin{align}
        e^{-i\frac{t}{2}e\alpha^1\cdot A_1}=(\Pi^1\otimes\mathbbm{1})\cdot e^{-i\frac{t}{2}e\beta\cdot A_1}\cdot(\Pi^1\otimes\mathbbm{1}),
    \end{align}
    for $x$ axis. We denoted $\mathbbm{1}$ as an identity matrix that only acts on the main register (but not on auxiliary registers). We now conclude that the space-dependent vector potential term can be performed with gate complexity $\mathcal{O}(\poly(\log(1/\varepsilon)))$ using Walsh operators.
\end{proof}

The circuit shown in \fig{dtqw:vector} demonstrates implementing a one-dimensional vector potential energy term with $n=4$. The $R_z$ rotations are ordered according to the Gray code to create the Walsh operator. It should be noted that CNOT gates can be optimized by eliminating $R_z(0)$ rotations.

\begin{theorem}[Digital quantum simulation of the time-dependent Dirac equation with $k$th-order product formula]\label{thm:Theorem}
    Consider an instance of the Dirac equation of the form~(\ref{eq:Schroedinger}) with Hamiltonian defined in~(\ref{eq:Dirac_Hamiltonian}). Then, for any $t\in\mathbb{R}$ (propagation time) and $\varepsilon$ (error tolerance), there exists a quantum algorithm producing a normalized state that approximates $\psi(T)$ with asymptotic gate complexity,
    \begin{align}
        \widetilde{\mathcal{O}}\Bigl( 5^{2k} (\norm{\phi}_{\max}+\norm{\bm{A}}_{\max})^{1+1/2k}T^{1+1/2k}/\varepsilon^{1/2k} \Bigr).
    \end{align}
\end{theorem}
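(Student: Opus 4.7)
The plan is to combine the time-discretization scheme from \sec{dtqw:Time discretization} with the per-term circuit implementations in \lem{lemma1}--\lem{lemma4}. I would first apply the $2k$th-order product formula $U_{2k}(t)$ with $r=T/t$ repetitions to the four non-commuting pieces $H_1,H_2,H_3,H_4$ of the Dirac Hamiltonian. By the bound recalled in \sec{dtqw:Time discretization}, this requires $\widetilde{\mathcal{O}}\bigl(5^{2k}(\Lambda T)^{1+1/2k}/\varepsilon^{1/2k}\bigr)$ single-term exponentials to reach a total Trotter error of $\varepsilon/2$, where $\Lambda$ is the effective Hamiltonian norm controlling the error estimate. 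Each exponential produced by the formula is then implemented by one of the four circuits already constructed in the preceding lemmas: the shift operator for $H_1$, the mass coin for $H_2$, the scalar phase for $H_3$, and the vector coin for $H_4$.

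\textbf{Effective norm.} The central step is to argue that $\Lambda$ can be taken as $\norm{\phi}_{\max}+\norm{\bm{A}}_{\max}$ rather than the full spectral norm $\norm{H}$. Two observations drive this. First, in the spectral method the kinetic term $H_1$ is diagonal in Fourier space with eigenvalues constrained by the CFL bound, and the mass term $H_2$ is a fixed constant; since the leading error in the $2k$th-order product formula is governed by nested commutators of the $H_j$, only those commutators involving the position-dependent potentials $H_3,H_4$ survive at the dominant asymptotic order. Second, the inner one-dimensional decompositions inside \lem{lemma1} and \lem{lemma4} introduce only a first-order Trotter error $\mathcal{O}(t^2/r^2)$, which is absorbed into the same $\varepsilon/2$ budget by the prescribed choice of $r$.

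\textbf{Per-exponential cost.} The remaining task is to multiply the number of exponentials by their individual gate cost. By \lem{lemma1} the shift operator costs $\mathcal{O}(\log n \log\log n)$; by \lem{lemma2} the mass coin costs $\mathcal{O}(1)$; by \lem{lemma3} and \lem{lemma4} the scalar phase and vector coin each cost $\mathcal{O}(\poly(\log(1/\varepsilon)))$, using Walsh operators combined with the proposition of Welch \emph{et al.} that reduces the required grid size to $n=\poly(\log(1/\varepsilon))$ whenever the potentials and spinors are infinitely differentiable. All four per-exponential costs are therefore poly-logarithmic in $1/\varepsilon$ and are absorbed into the $\widetilde{\mathcal{O}}$ notation, yielding exactly the advertised asymptotic bound.

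\textbf{Main obstacle.} The hard part is a coherent error budget: one must distribute $\varepsilon$ across (i) the outer $2k$th-order Trotter decomposition of $H=H_1+H_2+H_3+H_4$, (ii) the inner first-order decompositions of $H_1$ and $H_4$ into their three Cartesian components, (iii) the finite-grid truncation coming from the spectral representation, and (iv) the Walsh-operator approximation of the diagonal unitaries $e^{i\phi}$ and $e^{-ie\bm{\alpha}\cdot\bm{A}}$. Each piece contributes to the overall complexity through a different power of $\varepsilon$, and the delicate step is to verify that the outer-Trotter $\varepsilon^{-1/2k}$ factor genuinely dominates, so that the remaining three error sources can be driven below it at only poly-logarithmic cost without inflating the leading-order scaling or the $5^{2k}$ prefactor.
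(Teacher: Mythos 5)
Your overall skeleton---count the exponentials produced by the $2k$th-order product formula, then multiply by the per-exponential gate costs from \lem{lemma1}--\lem{lemma4}---is the same as the paper's. However, the step you yourself identify as central, replacing $\norm{H}$ by an ``effective norm'' $\Lambda=\norm{\phi}_{\max}+\norm{\bm{A}}_{\max}$ via a commutator-survival argument, is a genuine gap. First, it is not true that only commutators involving $H_3,H_4$ survive: $[H_1,H_2]\neq 0$ because the Dirac matrices anticommute ($\alpha^i\beta=-\beta\alpha^i$), and even in the commutators that do involve the potentials, nested terms such as $[H_1,[H_1,H_4]]$ carry factors of the kinetic norm, so $\norm{H_1}$ cannot be discarded on those grounds. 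Second, the exponential-count bound recalled in \sec{dtqw:Time discretization} (and the one the paper actually invokes, namely Berry \emph{et al.}'s Theorem~1 with $m=7$ mutually non-commuting terms) is a spectral-norm bound, not a commutator bound; it yields $N_{\mathrm{exp}}=\widetilde{\mathcal{O}}\bigl(5^{2k}\bigl[T\bigl(n\pi/\Omega+m+e\norm{\phi}_{\max}+e\norm{\bm{A}}_{\max}\bigr)\bigr]^{1+1/2k}/\varepsilon^{1/2k}\bigr)$, with the kinetic norm $\norm{H_1}=n\pi/\Omega$ explicitly present.

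The reason the kinetic and mass contributions are absent from the final statement is not a cancellation of commutators but the spectral method itself: for smooth solutions the required grid size is $n=\poly(\log(1/\varepsilon))$, so $\norm{H_1}=n\pi/\Omega$ is only polylogarithmic in $1/\varepsilon$ and, together with the constant $m$, is absorbed into the $\widetilde{\mathcal{O}}$, leaving the potential norms as the leading-order dependence. Your proposal already contains the needed ingredient---you use $n=\poly(\log(1/\varepsilon))$ when costing the Walsh operators---but you apply it only to the per-exponential gate counts, not to the norm entering the exponential count, which is where the fix belongs. The remainder of your argument (outer product formula, inner first-order splittings of $H_1$ and $H_4$ with error $\mathcal{O}(t^2/r^2)$, Walsh-operator approximation of the diagonal unitaries, and the poly-logarithmic per-exponential costs) matches the paper's proof, and your explicit error-budget bookkeeping is if anything more careful than what the paper writes down.
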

\begin{proof}
    We follow the procedure according to the Ref.~\cite{Childs2022quantumsimulationof}. The amount of quantum Fourier transforms (QFTs) is equivalent to the number of exponentials, and according to the upper bound presented in~\cite[Theorem 1]{Berry_2006}, the number of exponentials is limited by $m=7$ (as the scalar potential terms commute), such that
    \begin{align}
        N_{\mathrm{exp}} \leqslant 14 \cdot 5^{2k} (7\|H\|T)^{1+1/2k}/(\varepsilon/7)^{1/2k},
    \end{align}
    for the error at most $\varepsilon/7$.
    Since
    \begin{align}
        \begin{split}
            \norm{H_1} &= n\pi/\Omega, \\
            \norm{H_2} &= m, \\
            \norm{H_3} &= e\norm{\phi}_{\max}, \\
            \norm{H_4} &= e\norm{\bm{A}}_{\max},
        \end{split}
    \end{align}
    this implies that
    \begin{equation}
        \begin{aligned}
            & N_{\mathrm{exp}} \leqslant 14\cdot5^{2k}7^{1+1/2k} \times \\
            & \Big[T\Big(\frac{n\pi}{\Omega}+m+e\norm{\phi}_{\max}+e\norm{\bm{A}}_{\max}\Big)\Big]^{1+1/2k}/\varepsilon^{1/2k}.
        \end{aligned}
    \end{equation}
    After rescaling the $2k$th-order product formula to the $k$th-order, we have
    \begin{align}
        \norm{\psi(T)-\widetilde{\psi}(T)} \leqslant \varepsilon,
    \end{align}
    where $\widetilde{\psi}(T)=\left[ U_{2k}(t) \right]^r \psi(0)$. The result follows immediately from \lem{lemma1}, \lem{lemma2}, \lem{lemma3} and \lem{lemma4} that $\widetilde{\psi}(T)$ can be implemented with poly-logarithmic gate complexity.
\end{proof}

In comparison with the gate complexity presented in~\cite{Childs2022quantumsimulationof}, which is expressed as $\widetilde{O}( 5^{2k} d(d+\norm{f}_{\max})^{1+1/2k}T^{1+1/2k}/\epsilon^{1/2k} )$, the analysis above demonstrates that the potential terms dominate the asymptotics in both relativistic and non-relativistic real-space dynamics. Note that we considered $d=3$ in the relativistic case.
\section{Applications}
\label{sec:dtqw:simulation results}
\begin{figure*}[t]
    \centering
    \subfloat[]{\includegraphics[width=0.5\textwidth]{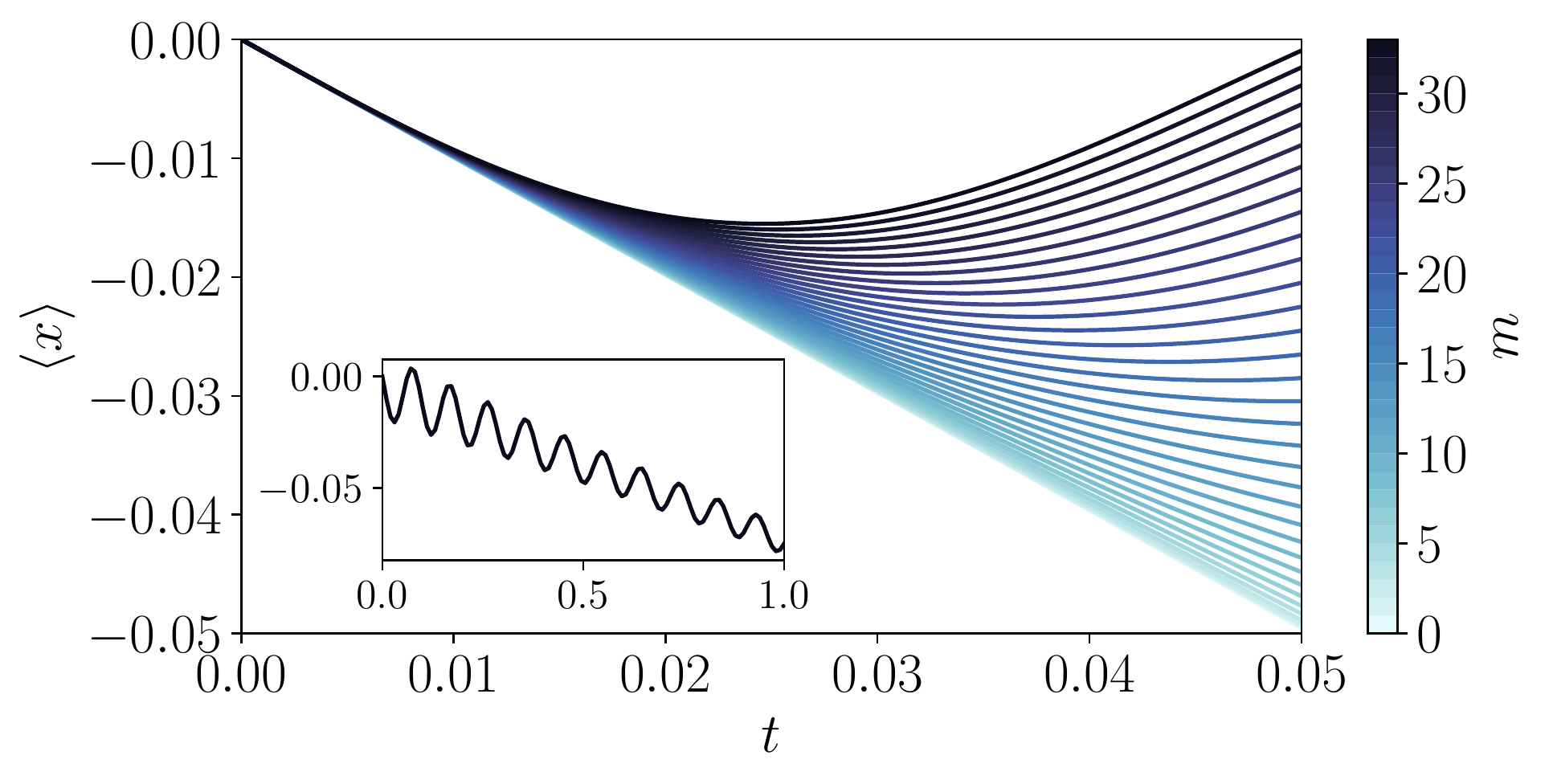}
    \label{fig:results:zitter}}
    \subfloat[]{\includegraphics[width=0.5\textwidth]{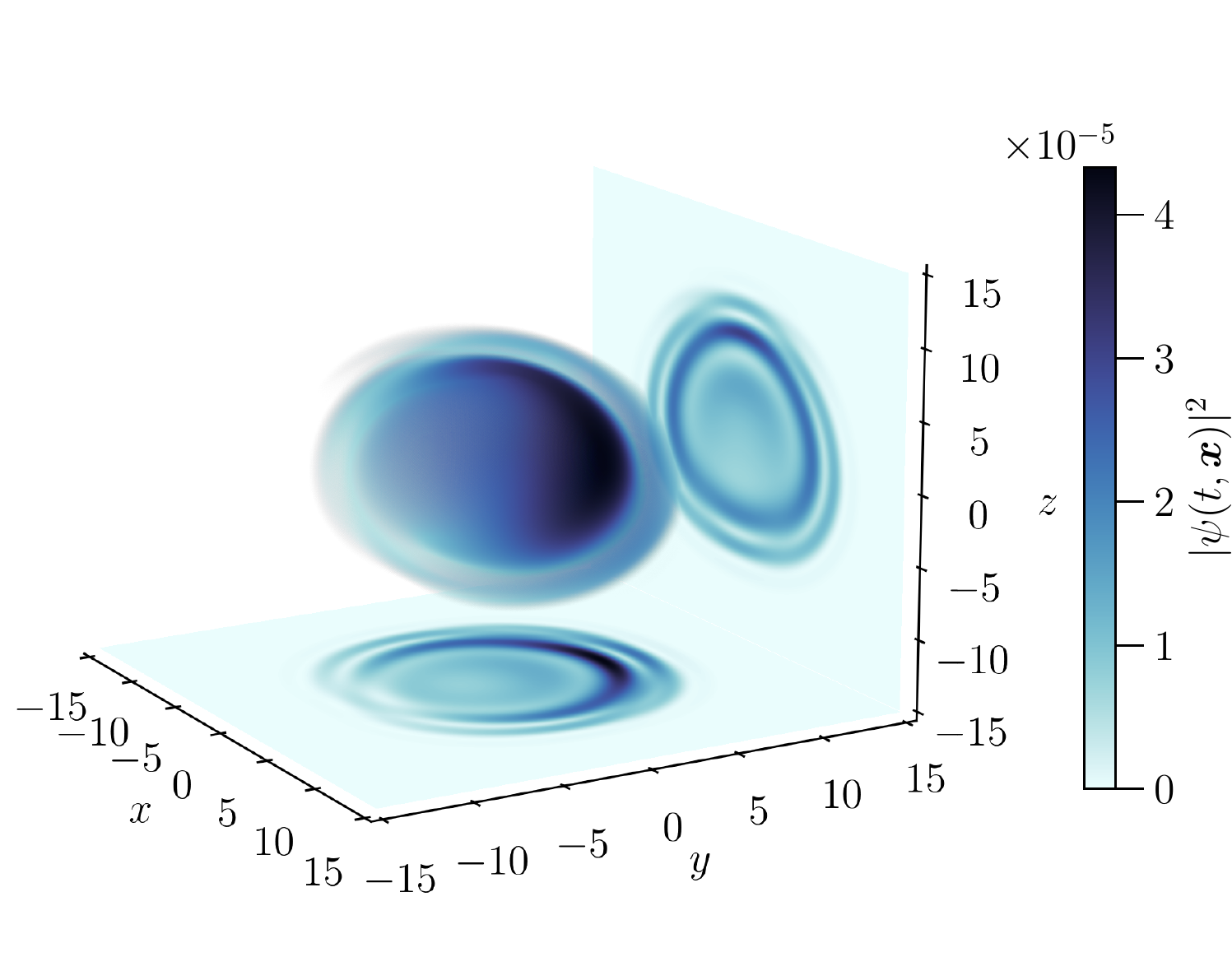}
    \label{fig:results:zitter3d}}
    \caption{(a) Simulation of the Zitterbewegung in 1+1 dimensions. The expectation value $\ev{x}$ of the Dirac particle exhibits increasingly strong trembling motion with higher mass $m$. At rest mass ($m=0$), the Dirac particle moves at the speed of light. The inset shows the long-term dynamics of the particle for the case of $m=33$. (b) Simulation in 3+1 dimensions. An initial spinor distributed around $x,y,z=0$ with particle mass $m=1$ shows the spinor's rotation at time $T=1$, clearly induced by the mass.}
\end{figure*}
This section demonstrates the digital quantum simulator with different settings. We first simulate Zitterbewegung, a phenomenon in relativistic quantum mechanics where a free Dirac particle exhibits a trembling motion even in the absence of any potential terms. This motion is known to arise from the interference between positive and negative energy solutions~\cite{Hestenes1990-ld}. We first consider a one-dimensional massive free Dirac particle to have the interference, assuming $A=\phi=0$ with $\hbar=c=1$. The initial spinor is defined as
\begin{align}
    \psi_0(x)=
    \left(\frac{1}{2\pi\sigma^2}\right)^{1/4}\left(
    \begin{array}{c}
         1  \\
         -1 
    \end{array}\right)
    e^{ip_0x}e^{-\frac{x^2}{4\sigma^2}},
\end{align}
with $p_0=1/4$ and $\sigma=0.05$. The initial momentum distribution is well localized in the momentum space to contribute to the Zitterbewegung. \fig{results:zitter} shows the expectation value of the position in the total time evolution $0.05\ \mathrm{n.u.}$ divided by 100 time-steps with different mass parameters ranging from $m=0$ to $m=33$. When the mass is $m=0$, the corresponding velocity is the speed of light. On the other hand, as the mass increases, the Dirac particle starts to move laterally, leading to a trembling motion.

We test our simulator using the time-dependent Dirac equation in 3+1 dimensions to showcase the three-dimensional Zitterbewegung. We observe the massive Dirac equation in three spatial dimensions in the simulation results displayed in \fig{results:zitter3d}. The figure depicts an instance of the Zitterbewegung at time $t=1$. As time goes on, the initial spinor rotates and propagates. The probability density $\abs{\psi(t,\bm{x})}^2$ is projected on the $x$--$y$ and $y$--$z$ plane to show the orbitals. The truncated space $\bm{x}\in[-15,15]^3$ was evaluated using a total of $n^3=(2^8)^3$ grid points.

We study the potential term by simulating the Klein paradox~\cite{Klein1929-xu,1999PhR...315...41D,2004PhRvL..92d0406K,Chiral2006,1993JPhA...26.1001S}, which considers high-energy particles scattering from a potential barrier.
In non-relativistic mechanics, governed by the Schr\"{o}dinger equation, the probability of transmission decreases with the height of the potential step.
This is in stark contrast with the situation in relativistic quantum mechanics, where the incoming particle's chance of getting transmitted increases with the height of the potential.
The incoming positive energy states can tunnel into the barrier and appear as negative energy states.

We use the Walsh operators to implement a scalar step potential barrier and place it at the origin $z=0$.
This choice results in the reduction of the number of CNOT gates needed to implement the potential step.
After optimization, the gates eventually become a single $R_z$ gate,
\begin{equation}
    \begin{minipage}[c]{.4\textwidth}
        \centering
        \begin{quantikz}[draw]
            \lstick{$e^{-i\frac{t}{2}V_0}=\ket{j_1}$} & \gate{R_z(tV_0)} & \qw
        \end{quantikz},
    \end{minipage}
\end{equation}
because only one Walsh coefficient remains from the step potential. Note that the number of coefficients depends on the potential function, which a discrete Walsh transform can evaluate, and the number of CNOT gates increases if the position of the potential step changes. Additionally, the potential height depends on the number of rotations, and $V_0$ determines the potential height.
We set the initial wave packet according to~\cite{PhysRevA.72.064103}
\begin{align}\label{eq:Eigenstate}
\psi_0(z)=\mathcal{N} P^{(p)}_+ e^{ipz_{0}} e^{-(p-p_0)^{2} \delta z^{2}}
\end{align}
where $\mathcal{N}$ is the normalization constant, $\delta z=0.03 \ \mathrm{a.u.}$ is the initial spatial width, and $p_0=106.4 \ \mathrm{a.u.}$ is the central momentum.
Moreover, the projector $P^{(p)}_+$ constructs a pure positive energy spinor,
\begin{align}
    P^{(p)}_+=\frac{1}{2}\biggl(\mathbb{I}_2+\frac{c\alpha^1p+\beta mc^2}{\sqrt{c^2p^2+m^2c^4}}\biggr),
\end{align}
with one-dimensional Dirac matrices $\alpha^1=\sigma^1$ and $\beta=\sigma^3$.

\begin{figure*}[t]
    \centering
    \includegraphics[width=\textwidth]{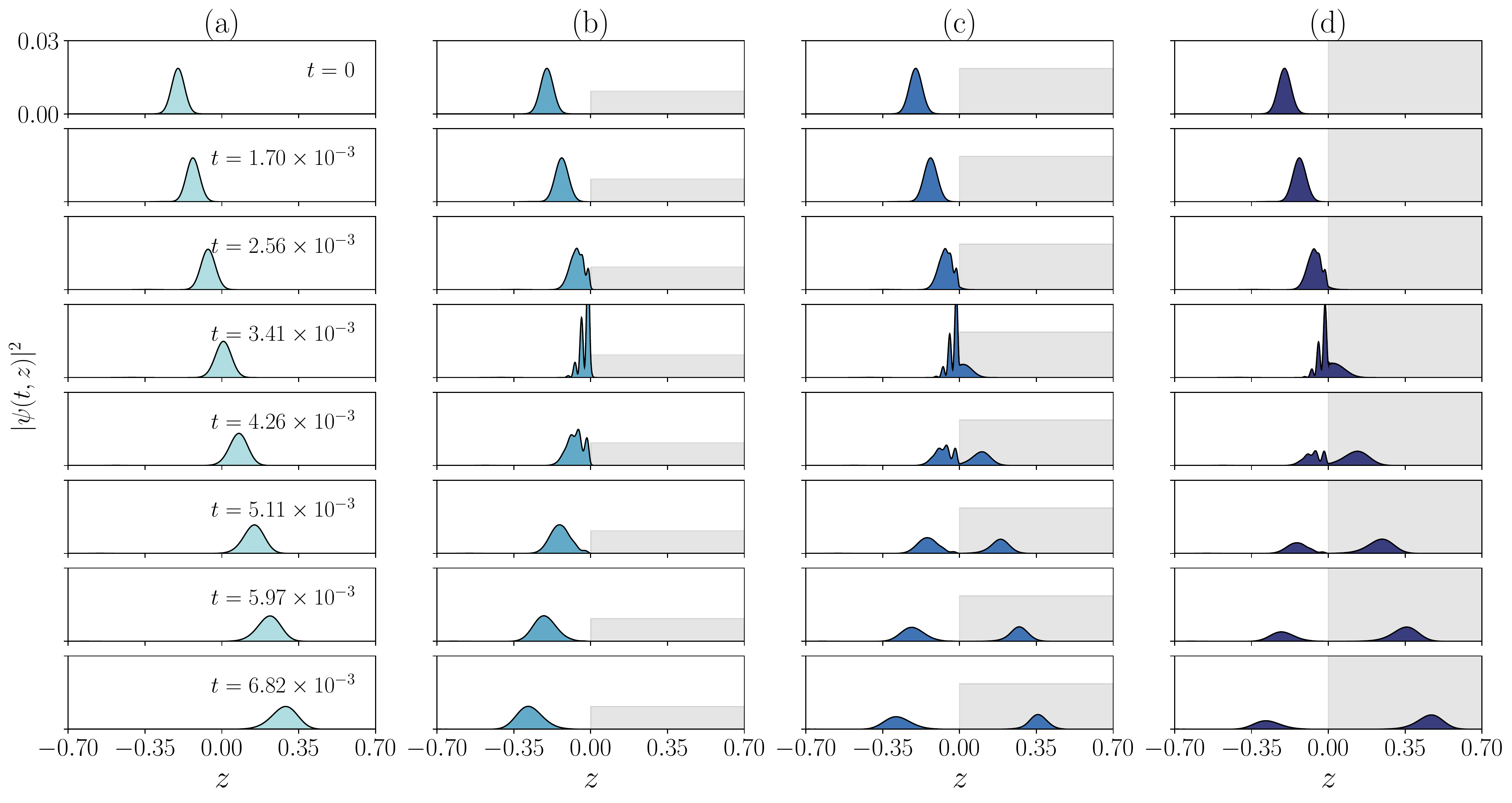}
    \caption{Simulation of the Klein paradox in 1+1 dimensions for increasing size of the step potential, shown by the shaded area. (a) The potential step is $V_0=0$, and the electron propagates freely. (b) $V_0=\Omega mc^2$, and the electron is fully reflected. (c) $V_0=2\Omega mc^2$, and the Klein pair production is observed, with the positron having a finite probability of transmission. (d) $V_0=4\Omega mc^2$ shows a further increase in the probability of transmission.}
    \label{fig:Klein_Paradox}
\end{figure*}

In \fig{Klein_Paradox}, we show the time evolution of the positive momentum eigenstates as an initial wave packet divided into 512 steps for the total simulation time of $T=6.82\times10^{-3} \ \mathrm{a.u.}$.
We consider four scenarios, the first one being the propagation of a free particle.
We observe that the particle propagates in the positive $z$-direction unimpeded, as shown in \figg{Klein_Paradox}{a}.
Next, we repeat the simulation with the same incident particle but introduce a potential step at $z=0$, depicted by the shaded area in \figg{Klein_Paradox}{b}.
The potential step is given by $V_0=\Omega mc^2$, where $\Omega=1.4$ is the computational cell volume.
Klein paradox predicts that this situation should result in a total reflection of the incoming particle, which is precisely what we observe in our simulation.
In~\figg{Klein_Paradox}{c} and~\figg{Klein_Paradox}{d}, the potential step is increased further to $V_0=2\Omega mc^2$ and $V_0=4\Omega mc^2$, respectively.
We observe that despite the increased height of the step potential, the incident particle has an increasing probability of transmission in accordance with the predictions of the Klein paradox.
Since the step potential is located in the center, at the moment the distribution of the packet on the right-hand side reached the center, it hit the potential wall, and the interference pattern appeared. When the wave packet transmits to the right region, Klein pair production occurs so that it transforms into the antiparticle.

\section{Conclusions}
\label{sec:dtqw:conclusions}

In this work, we have introduced a digital quantum simulator based on discrete-time quantum walks, which can efficiently simulate the time-dependent Dirac equation in 3+1 dimensions. We have demonstrated that each Trotterized operator can be implemented using simple shift (position-dependent) coin and phase operators of discrete-time quantum walks. By utilizing a sequence of single-qubit rotations and CNOT gates, we have shown that the Walsh operators for arbitrary scalar and vector potentials can be constructed. We have also developed quantum spectral methods that can approximate the solution globally to simulate relativistic dynamics and converge exponentially faster than ordinary finite-difference methods for smooth solutions. Moreover, we have analyzed the gate complexity of the overall simulation.

We used quantum spectral methods to simulate the 3+1 dimensional time-dependent Dirac equation with scalar and vector potential terms. Our analysis of gate complexity revealed that the increase in complexity is negligible for the transition from non-relativistic to relativistic systems. It strongly depends on the order of the product formula. Moreover, we found that the complexity can be improved for potential functions with appropriate structures, as the potential terms usually dominate the complexity. Our simulation of the Klein paradox with step potential demonstrated that it could be implemented with a single $R_z$ gate. It is important to note that the simulator requires two auxiliary qubits for 3+1 dimensions. In general, the number of auxiliary qubits scales logarithmically with the dimension.

To the best of our knowledge, this study presents the first quantum simulator for the time-dependent Dirac equation with poly-logarithmic complexity. While there have been several previous attempts at developing a quantum simulator for the time-dependent Schr\"{o}dinger equation~\cite{benenti2008,Childs2022quantumsimulationof}, there have been few for the Dirac equation. For example, a recent work by Fillion-Gourdeau \emph{et al.}~\cite{Fillion_Gourdeau_2017} proposed quantum walk algorithms for the Dirac equation using multi-controlled CNOT gates to propagate a walker (solution) in the position space. In contrast, our study proposes a unique alternative scheme for the simulation that employs the quantum Fourier transform to evaluate the kinetic term, giving a global approximation of the solution in the momentum space. This approach mitigates errors and reduces gate complexity for a smoother solution. Our study stands out by explicitly showing the feasibility of this approach.

One limitation of our study is that it assumes the availability of an efficient state preparation procedure for the simulator. Although we have developed and analyzed time-evolution algorithms, initializing the quantum state on quantum registers must be done efficiently. This can be achieved using amplitude encoding algorithms~\cite{PhysRevResearch.4.023136,mitsuda2023approximate}, which allow initializing a general wave packet. However, in most cases, we are interested in starting from the ground state of a given Hamiltonian. Several algorithms have been proposed for this case, such as the imaginary time evolution algorithms~\cite{leadbeater2023nonunitary}. While evolution becomes non-unitary in imaginary time evolution, which is not executable on a quantum computer, recent studies have shown that Hamiltonian simulation techniques can make it executable. Therefore, combining other state preparation algorithms could potentially overcome this limitation.

Our findings have significant implications for the reduced computational cost of simulating the Dirac equation, making it possible to have accurate quantum simulations with relativistic corrections.
Recent advances in quantum walks~\cite{PhysRevA.88.042301,Arrighi2016,mallick2019simulating} and analog quantum simulators~\cite{Viermann_2022} suggest that quantum computers can simulate the Dirac equation in curved spacetime~\cite{Pollock:2010zz}.
We believe that our algorithm represents a promising starting point in incorporating curved spacetime metrics into the simulation.
In addition, recent studies have shown that quantum walks can be extended to quantum cellular automata~\cite{PhysRevA.102.062222}, where the model corresponds to quantum field theories in the continuum limits. Incorporating these extensions into our simulator is an inspiring direction for future research.

\begin{acknowledgments}
S.M. is supported by MEXT Quantum Leap Flagship Program Grant Number JPMXS0120319794 and the National Institute of Information and Communications Technology (NICT Quantum Camp).
T.S. is supported by MEXT KAKENHI Grant Number 22K19781.
\end{acknowledgments}

\bibliographystyle{apsrev4-2}
\bibliography{refs}

\begin{thebibliography}{80}%
\makeatletter
\providecommand \@ifxundefined [1]{%
 \@ifx{#1\undefined}
}%
\providecommand \@ifnum [1]{%
 \ifnum #1\expandafter \@firstoftwo
 \else \expandafter \@secondoftwo
 \fi
}%
\providecommand \@ifx [1]{%
 \ifx #1\expandafter \@firstoftwo
 \else \expandafter \@secondoftwo
 \fi
}%
\providecommand \natexlab [1]{#1}%
\providecommand \enquote  [1]{``#1''}%
\providecommand \bibnamefont  [1]{#1}%
\providecommand \bibfnamefont [1]{#1}%
\providecommand \citenamefont [1]{#1}%
\providecommand \href@noop [0]{\@secondoftwo}%
\providecommand \href [0]{\begingroup \@sanitize@url \@href}%
\providecommand \@href[1]{\@@startlink{#1}\@@href}%
\providecommand \@@href[1]{\endgroup#1\@@endlink}%
\providecommand \@sanitize@url [0]{\catcode `\\12\catcode `\$12\catcode
  `\&12\catcode `\#12\catcode `\^12\catcode `\_12\catcode `\%12\relax}%
\providecommand \@@startlink[1]{}%
\providecommand \@@endlink[0]{}%
\providecommand \url  [0]{\begingroup\@sanitize@url \@url }%
\providecommand \@url [1]{\endgroup\@href {#1}{\urlprefix }}%
\providecommand \urlprefix  [0]{URL }%
\providecommand \Eprint [0]{\href }%
\providecommand \doibase [0]{https://doi.org/}%
\providecommand \selectlanguage [0]{\@gobble}%
\providecommand \bibinfo  [0]{\@secondoftwo}%
\providecommand \bibfield  [0]{\@secondoftwo}%
\providecommand \translation [1]{[#1]}%
\providecommand \BibitemOpen [0]{}%
\providecommand \bibitemStop [0]{}%
\providecommand \bibitemNoStop [0]{.\EOS\space}%
\providecommand \EOS [0]{\spacefactor3000\relax}%
\providecommand \BibitemShut  [1]{\csname bibitem#1\endcsname}%
\let\auto@bib@innerbib\@empty
\bibitem [{\citenamefont {Thaller}(2013)}]{thaller2013dirac}%
  \BibitemOpen
  \bibfield  {author} {\bibinfo {author} {\bibfnamefont {B.}~\bibnamefont
  {Thaller}},\ }\href {https://books.google.co.jp/books?id=4bftCAAAQBAJ} {\emph
  {\bibinfo {title} {The Dirac Equation}}},\ Theor. Math. Phys.\ (\bibinfo
  {publisher} {Springer Berlin Heidelberg},\ \bibinfo {year}
  {2013})\BibitemShut {NoStop}%
\bibitem [{\citenamefont {Schr{\"o}dinger}(1930)}]{zbMATH02565649}%
  \BibitemOpen
  \bibfield  {author} {\bibinfo {author} {\bibfnamefont {E.}~\bibnamefont
  {Schr{\"o}dinger}},\ }\href@noop {} {\bibfield  {journal} {\bibinfo
  {journal} {Sitzungsber. Preuss. Akad. Wiss. Phys. Math. Kl.}\ }\textbf
  {\bibinfo {volume} {1930}},\ \bibinfo {pages} {418} (\bibinfo {year}
  {1930})}\BibitemShut {NoStop}%
\bibitem [{\citenamefont {Barut}\ and\ \citenamefont
  {Bracken}(1981)}]{PhysRevD.23.2454}%
  \BibitemOpen
  \bibfield  {author} {\bibinfo {author} {\bibfnamefont {A.~O.}\ \bibnamefont
  {Barut}}\ and\ \bibinfo {author} {\bibfnamefont {A.~J.}\ \bibnamefont
  {Bracken}},\ }\href {https://doi.org/10.1103/PhysRevD.23.2454} {\bibfield
  {journal} {\bibinfo  {journal} {Phys. Rev. D}\ }\textbf {\bibinfo {volume}
  {23}},\ \bibinfo {pages} {2454} (\bibinfo {year} {1981})}\BibitemShut
  {NoStop}%
\bibitem [{\citenamefont {Hestenes}(1990)}]{Hestenes1990-ld}%
  \BibitemOpen
  \bibfield  {author} {\bibinfo {author} {\bibfnamefont {D.}~\bibnamefont
  {Hestenes}},\ }\href@noop {} {\bibfield  {journal} {\bibinfo  {journal}
  {Found. Phys.}\ }\textbf {\bibinfo {volume} {20}},\ \bibinfo {pages} {1213}
  (\bibinfo {year} {1990})}\BibitemShut {NoStop}%
\bibitem [{\citenamefont {Huang}(1952)}]{doi:10.1119/1.1933296}%
  \BibitemOpen
  \bibfield  {author} {\bibinfo {author} {\bibfnamefont {K.}~\bibnamefont
  {Huang}},\ }\href {https://doi.org/10.1119/1.1933296} {\bibfield  {journal}
  {\bibinfo  {journal} {Am. J. Phys.}\ }\textbf {\bibinfo {volume} {20}},\
  \bibinfo {pages} {479} (\bibinfo {year} {1952})}\BibitemShut {NoStop}%
\bibitem [{\citenamefont {Sidharth}(2008)}]{Sidharth_2008}%
  \BibitemOpen
  \bibfield  {author} {\bibinfo {author} {\bibfnamefont {B.~G.}\ \bibnamefont
  {Sidharth}},\ }\href {https://doi.org/10.1007/s10773-008-9825-8} {\bibfield
  {journal} {\bibinfo  {journal} {Int. J. Theor. Phys.}\ }\textbf {\bibinfo
  {volume} {48}},\ \bibinfo {pages} {497} (\bibinfo {year} {2008})}\BibitemShut
  {NoStop}%
\bibitem [{\citenamefont {Vaishnav}\ and\ \citenamefont
  {Clark}(2008)}]{PhysRevLett.100.153002}%
  \BibitemOpen
  \bibfield  {author} {\bibinfo {author} {\bibfnamefont {J.~Y.}\ \bibnamefont
  {Vaishnav}}\ and\ \bibinfo {author} {\bibfnamefont {C.~W.}\ \bibnamefont
  {Clark}},\ }\href {https://doi.org/10.1103/PhysRevLett.100.153002} {\bibfield
   {journal} {\bibinfo  {journal} {Phys. Rev. Lett.}\ }\textbf {\bibinfo
  {volume} {100}},\ \bibinfo {pages} {153002} (\bibinfo {year}
  {2008})}\BibitemShut {NoStop}%
\bibitem [{\citenamefont {Katsnelson}(2006)}]{Katsnelson2006-dw}%
  \BibitemOpen
  \bibfield  {author} {\bibinfo {author} {\bibfnamefont {M.~I.}\ \bibnamefont
  {Katsnelson}},\ }\href@noop {} {\bibfield  {journal} {\bibinfo  {journal}
  {Eur. Phys. J. B}\ }\textbf {\bibinfo {volume} {51}},\ \bibinfo {pages} {157}
  (\bibinfo {year} {2006})}\BibitemShut {NoStop}%
\bibitem [{\citenamefont {Casanova}\ \emph {et~al.}(2010)\citenamefont
  {Casanova}, \citenamefont {Garc\'{\i}a-Ripoll}, \citenamefont {Gerritsma},
  \citenamefont {Roos},\ and\ \citenamefont {Solano}}]{PhysRevA.82.020101}%
  \BibitemOpen
  \bibfield  {author} {\bibinfo {author} {\bibfnamefont {J.}~\bibnamefont
  {Casanova}}, \bibinfo {author} {\bibfnamefont {J.~J.}\ \bibnamefont
  {Garc\'{\i}a-Ripoll}}, \bibinfo {author} {\bibfnamefont {R.}~\bibnamefont
  {Gerritsma}}, \bibinfo {author} {\bibfnamefont {C.~F.}\ \bibnamefont
  {Roos}},\ and\ \bibinfo {author} {\bibfnamefont {E.}~\bibnamefont {Solano}},\
  }\href {https://doi.org/10.1103/PhysRevA.82.020101} {\bibfield  {journal}
  {\bibinfo  {journal} {Phys. Rev. A}\ }\textbf {\bibinfo {volume} {82}},\
  \bibinfo {pages} {020101} (\bibinfo {year} {2010})}\BibitemShut {NoStop}%
\bibitem [{\citenamefont {Cottingham}\ and\ \citenamefont
  {Greenwood}(2007)}]{cottingham_greenwood_2007}%
  \BibitemOpen
  \bibfield  {author} {\bibinfo {author} {\bibfnamefont {W.~N.}\ \bibnamefont
  {Cottingham}}\ and\ \bibinfo {author} {\bibfnamefont {D.~A.}\ \bibnamefont
  {Greenwood}},\ }\href {https://doi.org/10.1017/CBO9780511791406} {\emph
  {\bibinfo {title} {An Introduction to the Standard Model of Particle
  Physics}}},\ \bibinfo {edition} {2nd}\ ed.\ (\bibinfo  {publisher} {Cambridge
  University Press},\ \bibinfo {year} {2007})\BibitemShut {NoStop}%
\bibitem [{\citenamefont {Bilenky}(2020)}]{Bilenky:2020wjn}%
  \BibitemOpen
  \bibfield  {author} {\bibinfo {author} {\bibfnamefont {S.~M.}\ \bibnamefont
  {Bilenky}},\ }\href {https://doi.org/10.3390/universe6090134} {\bibfield
  {journal} {\bibinfo  {journal} {Universe}\ }\textbf {\bibinfo {volume} {6}},\
  \bibinfo {pages} {134} (\bibinfo {year} {2020})}\BibitemShut {NoStop}%
\bibitem [{\citenamefont {Balantekin}\ \emph {et~al.}(2019)\citenamefont
  {Balantekin}, \citenamefont {{de Gouvêa}},\ and\ \citenamefont
  {Kayser}}]{BALANTEKIN2019488}%
  \BibitemOpen
  \bibfield  {author} {\bibinfo {author} {\bibfnamefont {A.~B.}\ \bibnamefont
  {Balantekin}}, \bibinfo {author} {\bibfnamefont {A.}~\bibnamefont {{de
  Gouvêa}}},\ and\ \bibinfo {author} {\bibfnamefont {B.}~\bibnamefont
  {Kayser}},\ }\href
  {https://doi.org/https://doi.org/10.1016/j.physletb.2018.11.068} {\bibfield
  {journal} {\bibinfo  {journal} {Phys. Lett. B}\ }\textbf {\bibinfo {volume}
  {789}},\ \bibinfo {pages} {488} (\bibinfo {year} {2019})}\BibitemShut
  {NoStop}%
\bibitem [{\citenamefont {Wilson}(1974)}]{PhysRevD.10.2445}%
  \BibitemOpen
  \bibfield  {author} {\bibinfo {author} {\bibfnamefont {K.~G.}\ \bibnamefont
  {Wilson}},\ }\href {https://doi.org/10.1103/PhysRevD.10.2445} {\bibfield
  {journal} {\bibinfo  {journal} {Phys. Rev. D}\ }\textbf {\bibinfo {volume}
  {10}},\ \bibinfo {pages} {2445} (\bibinfo {year} {1974})}\BibitemShut
  {NoStop}%
\bibitem [{\citenamefont {Kogut}\ and\ \citenamefont
  {Susskind}(1975)}]{PhysRevD.11.395}%
  \BibitemOpen
  \bibfield  {author} {\bibinfo {author} {\bibfnamefont {J.}~\bibnamefont
  {Kogut}}\ and\ \bibinfo {author} {\bibfnamefont {L.}~\bibnamefont
  {Susskind}},\ }\href {https://doi.org/10.1103/PhysRevD.11.395} {\bibfield
  {journal} {\bibinfo  {journal} {Phys. Rev. D}\ }\textbf {\bibinfo {volume}
  {11}},\ \bibinfo {pages} {395} (\bibinfo {year} {1975})}\BibitemShut
  {NoStop}%
\bibitem [{\citenamefont {Martinez}\ \emph {et~al.}(2016)\citenamefont
  {Martinez}, \citenamefont {Muschik}, \citenamefont {Schindler}, \citenamefont
  {Nigg}, \citenamefont {Erhard}, \citenamefont {Heyl}, \citenamefont {Hauke},
  \citenamefont {Dalmonte}, \citenamefont {Monz}, \citenamefont {Zoller},\ and\
  \citenamefont {Blatt}}]{Martinez_2016}%
  \BibitemOpen
  \bibfield  {author} {\bibinfo {author} {\bibfnamefont {E.~A.}\ \bibnamefont
  {Martinez}}, \bibinfo {author} {\bibfnamefont {C.~A.}\ \bibnamefont
  {Muschik}}, \bibinfo {author} {\bibfnamefont {P.}~\bibnamefont {Schindler}},
  \bibinfo {author} {\bibfnamefont {D.}~\bibnamefont {Nigg}}, \bibinfo {author}
  {\bibfnamefont {A.}~\bibnamefont {Erhard}}, \bibinfo {author} {\bibfnamefont
  {M.}~\bibnamefont {Heyl}}, \bibinfo {author} {\bibfnamefont {P.}~\bibnamefont
  {Hauke}}, \bibinfo {author} {\bibfnamefont {M.}~\bibnamefont {Dalmonte}},
  \bibinfo {author} {\bibfnamefont {T.}~\bibnamefont {Monz}}, \bibinfo {author}
  {\bibfnamefont {P.}~\bibnamefont {Zoller}},\ and\ \bibinfo {author}
  {\bibfnamefont {R.}~\bibnamefont {Blatt}},\ }\href
  {https://doi.org/10.1038/nature18318} {\bibfield  {journal} {\bibinfo
  {journal} {Nature}\ }\textbf {\bibinfo {volume} {534}},\ \bibinfo {pages}
  {516} (\bibinfo {year} {2016})}\BibitemShut {NoStop}%
\bibitem [{\citenamefont {Ba{\~{n}}uls}\ \emph {et~al.}(2020)\citenamefont
  {Ba{\~{n}}uls}, \citenamefont {Blatt}, \citenamefont {Catani}, \citenamefont
  {Celi}, \citenamefont {Cirac}, \citenamefont {Dalmonte}, \citenamefont
  {Fallani}, \citenamefont {Jansen}, \citenamefont {Lewenstein}, \citenamefont
  {Montangero}, \citenamefont {Muschik}, \citenamefont {Reznik}, \citenamefont
  {Rico}, \citenamefont {Tagliacozzo}, \citenamefont {Van~Acoleyen},
  \citenamefont {Verstraete}, \citenamefont {Wiese}, \citenamefont {Wingate},
  \citenamefont {Zakrzewski},\ and\ \citenamefont {Zoller}}]{Bañuls2020}%
  \BibitemOpen
  \bibfield  {author} {\bibinfo {author} {\bibfnamefont {M.~C.}\ \bibnamefont
  {Ba{\~{n}}uls}}, \bibinfo {author} {\bibfnamefont {R.}~\bibnamefont {Blatt}},
  \bibinfo {author} {\bibfnamefont {J.}~\bibnamefont {Catani}}, \bibinfo
  {author} {\bibfnamefont {A.}~\bibnamefont {Celi}}, \bibinfo {author}
  {\bibfnamefont {J.~I.}\ \bibnamefont {Cirac}}, \bibinfo {author}
  {\bibfnamefont {M.}~\bibnamefont {Dalmonte}}, \bibinfo {author}
  {\bibfnamefont {L.}~\bibnamefont {Fallani}}, \bibinfo {author} {\bibfnamefont
  {K.}~\bibnamefont {Jansen}}, \bibinfo {author} {\bibfnamefont
  {M.}~\bibnamefont {Lewenstein}}, \bibinfo {author} {\bibfnamefont
  {S.}~\bibnamefont {Montangero}}, \bibinfo {author} {\bibfnamefont {C.~A.}\
  \bibnamefont {Muschik}}, \bibinfo {author} {\bibfnamefont {B.}~\bibnamefont
  {Reznik}}, \bibinfo {author} {\bibfnamefont {E.}~\bibnamefont {Rico}},
  \bibinfo {author} {\bibfnamefont {L.}~\bibnamefont {Tagliacozzo}}, \bibinfo
  {author} {\bibfnamefont {K.}~\bibnamefont {Van~Acoleyen}}, \bibinfo {author}
  {\bibfnamefont {F.}~\bibnamefont {Verstraete}}, \bibinfo {author}
  {\bibfnamefont {U.-J.}\ \bibnamefont {Wiese}}, \bibinfo {author}
  {\bibfnamefont {M.}~\bibnamefont {Wingate}}, \bibinfo {author} {\bibfnamefont
  {J.}~\bibnamefont {Zakrzewski}},\ and\ \bibinfo {author} {\bibfnamefont
  {P.}~\bibnamefont {Zoller}},\ }\href
  {https://doi.org/10.1140/epjd/e2020-100571-8} {\bibfield  {journal} {\bibinfo
   {journal} {Eur. Phys. J. D}\ }\textbf {\bibinfo {volume} {74}},\ \bibinfo
  {pages} {165} (\bibinfo {year} {2020})}\BibitemShut {NoStop}%
\bibitem [{\citenamefont {Dalmonte}\ and\ \citenamefont
  {Montangero}(2016)}]{Dalmonte_2016}%
  \BibitemOpen
  \bibfield  {author} {\bibinfo {author} {\bibfnamefont {M.}~\bibnamefont
  {Dalmonte}}\ and\ \bibinfo {author} {\bibfnamefont {S.}~\bibnamefont
  {Montangero}},\ }\href {https://doi.org/10.1080/00107514.2016.1151199}
  {\bibfield  {journal} {\bibinfo  {journal} {Contemp. Phys.}\ }\textbf
  {\bibinfo {volume} {57}},\ \bibinfo {pages} {388} (\bibinfo {year}
  {2016})}\BibitemShut {NoStop}%
\bibitem [{\citenamefont {Bauer}\ \emph {et~al.}(2023)\citenamefont {Bauer},
  \citenamefont {Davoudi}, \citenamefont {Balantekin}, \citenamefont
  {Bhattacharya}, \citenamefont {Carena}, \citenamefont {de~Jong},
  \citenamefont {Draper}, \citenamefont {El-Khadra}, \citenamefont {Gemelke},
  \citenamefont {Hanada}, \citenamefont {Kharzeev}, \citenamefont {Lamm},
  \citenamefont {Li}, \citenamefont {Liu}, \citenamefont {Lukin}, \citenamefont
  {Meurice}, \citenamefont {Monroe}, \citenamefont {Nachman}, \citenamefont
  {Pagano}, \citenamefont {Preskill}, \citenamefont {Rinaldi}, \citenamefont
  {Roggero}, \citenamefont {Santiago}, \citenamefont {Savage}, \citenamefont
  {Siddiqi}, \citenamefont {Siopsis}, \citenamefont {Van~Zanten}, \citenamefont
  {Wiebe}, \citenamefont {Yamauchi}, \citenamefont {Yeter-Aydeniz},\ and\
  \citenamefont {Zorzetti}}]{PRXQuantum.4.027001}%
  \BibitemOpen
  \bibfield  {author} {\bibinfo {author} {\bibfnamefont {C.~W.}\ \bibnamefont
  {Bauer}}, \bibinfo {author} {\bibfnamefont {Z.}~\bibnamefont {Davoudi}},
  \bibinfo {author} {\bibfnamefont {A.~B.}\ \bibnamefont {Balantekin}},
  \bibinfo {author} {\bibfnamefont {T.}~\bibnamefont {Bhattacharya}}, \bibinfo
  {author} {\bibfnamefont {M.}~\bibnamefont {Carena}}, \bibinfo {author}
  {\bibfnamefont {W.~A.}\ \bibnamefont {de~Jong}}, \bibinfo {author}
  {\bibfnamefont {P.}~\bibnamefont {Draper}}, \bibinfo {author} {\bibfnamefont
  {A.}~\bibnamefont {El-Khadra}}, \bibinfo {author} {\bibfnamefont
  {N.}~\bibnamefont {Gemelke}}, \bibinfo {author} {\bibfnamefont
  {M.}~\bibnamefont {Hanada}}, \bibinfo {author} {\bibfnamefont
  {D.}~\bibnamefont {Kharzeev}}, \bibinfo {author} {\bibfnamefont
  {H.}~\bibnamefont {Lamm}}, \bibinfo {author} {\bibfnamefont {Y.-Y.}\
  \bibnamefont {Li}}, \bibinfo {author} {\bibfnamefont {J.}~\bibnamefont
  {Liu}}, \bibinfo {author} {\bibfnamefont {M.}~\bibnamefont {Lukin}}, \bibinfo
  {author} {\bibfnamefont {Y.}~\bibnamefont {Meurice}}, \bibinfo {author}
  {\bibfnamefont {C.}~\bibnamefont {Monroe}}, \bibinfo {author} {\bibfnamefont
  {B.}~\bibnamefont {Nachman}}, \bibinfo {author} {\bibfnamefont
  {G.}~\bibnamefont {Pagano}}, \bibinfo {author} {\bibfnamefont
  {J.}~\bibnamefont {Preskill}}, \bibinfo {author} {\bibfnamefont
  {E.}~\bibnamefont {Rinaldi}}, \bibinfo {author} {\bibfnamefont
  {A.}~\bibnamefont {Roggero}}, \bibinfo {author} {\bibfnamefont {D.~I.}\
  \bibnamefont {Santiago}}, \bibinfo {author} {\bibfnamefont {M.~J.}\
  \bibnamefont {Savage}}, \bibinfo {author} {\bibfnamefont {I.}~\bibnamefont
  {Siddiqi}}, \bibinfo {author} {\bibfnamefont {G.}~\bibnamefont {Siopsis}},
  \bibinfo {author} {\bibfnamefont {D.}~\bibnamefont {Van~Zanten}}, \bibinfo
  {author} {\bibfnamefont {N.}~\bibnamefont {Wiebe}}, \bibinfo {author}
  {\bibfnamefont {Y.}~\bibnamefont {Yamauchi}}, \bibinfo {author}
  {\bibfnamefont {K.}~\bibnamefont {Yeter-Aydeniz}},\ and\ \bibinfo {author}
  {\bibfnamefont {S.}~\bibnamefont {Zorzetti}},\ }\href
  {https://doi.org/10.1103/PRXQuantum.4.027001} {\bibfield  {journal} {\bibinfo
   {journal} {PRX Quantum}\ }\textbf {\bibinfo {volume} {4}},\ \bibinfo {pages}
  {027001} (\bibinfo {year} {2023})}\BibitemShut {NoStop}%
\bibitem [{\citenamefont {Wang}\ \emph {et~al.}(2012)\citenamefont {Wang},
  \citenamefont {Sun}, \citenamefont {Chen}, \citenamefont {Franchini},
  \citenamefont {Xu}, \citenamefont {Weng}, \citenamefont {Dai},\ and\
  \citenamefont {Fang}}]{PhysRevB.85.195320}%
  \BibitemOpen
  \bibfield  {author} {\bibinfo {author} {\bibfnamefont {Z.}~\bibnamefont
  {Wang}}, \bibinfo {author} {\bibfnamefont {Y.}~\bibnamefont {Sun}}, \bibinfo
  {author} {\bibfnamefont {X.-Q.}\ \bibnamefont {Chen}}, \bibinfo {author}
  {\bibfnamefont {C.}~\bibnamefont {Franchini}}, \bibinfo {author}
  {\bibfnamefont {G.}~\bibnamefont {Xu}}, \bibinfo {author} {\bibfnamefont
  {H.}~\bibnamefont {Weng}}, \bibinfo {author} {\bibfnamefont {X.}~\bibnamefont
  {Dai}},\ and\ \bibinfo {author} {\bibfnamefont {Z.}~\bibnamefont {Fang}},\
  }\href {https://doi.org/10.1103/PhysRevB.85.195320} {\bibfield  {journal}
  {\bibinfo  {journal} {Phys. Rev. B}\ }\textbf {\bibinfo {volume} {85}},\
  \bibinfo {pages} {195320} (\bibinfo {year} {2012})}\BibitemShut {NoStop}%
\bibitem [{\citenamefont {Wang}\ \emph {et~al.}(2013)\citenamefont {Wang},
  \citenamefont {Weng}, \citenamefont {Wu}, \citenamefont {Dai},\ and\
  \citenamefont {Fang}}]{PhysRevB.88.125427}%
  \BibitemOpen
  \bibfield  {author} {\bibinfo {author} {\bibfnamefont {Z.}~\bibnamefont
  {Wang}}, \bibinfo {author} {\bibfnamefont {H.}~\bibnamefont {Weng}}, \bibinfo
  {author} {\bibfnamefont {Q.}~\bibnamefont {Wu}}, \bibinfo {author}
  {\bibfnamefont {X.}~\bibnamefont {Dai}},\ and\ \bibinfo {author}
  {\bibfnamefont {Z.}~\bibnamefont {Fang}},\ }\href
  {https://doi.org/10.1103/PhysRevB.88.125427} {\bibfield  {journal} {\bibinfo
  {journal} {Phys. Rev. B}\ }\textbf {\bibinfo {volume} {88}},\ \bibinfo
  {pages} {125427} (\bibinfo {year} {2013})}\BibitemShut {NoStop}%
\bibitem [{\citenamefont {Young}\ \emph {et~al.}(2012)\citenamefont {Young},
  \citenamefont {Zaheer}, \citenamefont {Teo}, \citenamefont {Kane},
  \citenamefont {Mele},\ and\ \citenamefont {Rappe}}]{PhysRevLett.108.140405}%
  \BibitemOpen
  \bibfield  {author} {\bibinfo {author} {\bibfnamefont {S.~M.}\ \bibnamefont
  {Young}}, \bibinfo {author} {\bibfnamefont {S.}~\bibnamefont {Zaheer}},
  \bibinfo {author} {\bibfnamefont {J.~C.~Y.}\ \bibnamefont {Teo}}, \bibinfo
  {author} {\bibfnamefont {C.~L.}\ \bibnamefont {Kane}}, \bibinfo {author}
  {\bibfnamefont {E.~J.}\ \bibnamefont {Mele}},\ and\ \bibinfo {author}
  {\bibfnamefont {A.~M.}\ \bibnamefont {Rappe}},\ }\href
  {https://doi.org/10.1103/PhysRevLett.108.140405} {\bibfield  {journal}
  {\bibinfo  {journal} {Phys. Rev. Lett.}\ }\textbf {\bibinfo {volume} {108}},\
  \bibinfo {pages} {140405} (\bibinfo {year} {2012})}\BibitemShut {NoStop}%
\bibitem [{\citenamefont
  {Hess}(1997)}]{https://doi.org/10.1002/bbpc.19971010102}%
  \BibitemOpen
  \bibfield  {author} {\bibinfo {author} {\bibfnamefont {B.~A.}\ \bibnamefont
  {Hess}},\ }\href {https://doi.org/https://doi.org/10.1002/bbpc.19971010102}
  {\bibfield  {journal} {\bibinfo  {journal} {Ber. Bunsenges. Phys. Chem.}\
  }\textbf {\bibinfo {volume} {101}},\ \bibinfo {pages} {1} (\bibinfo {year}
  {1997})}\BibitemShut {NoStop}%
\bibitem [{\citenamefont {Christiansen}\ \emph {et~al.}(1985)\citenamefont
  {Christiansen}, \citenamefont {Ermler},\ and\ \citenamefont
  {Pitzer}}]{doi:10.1146/annurev.pc.36.100185.002203}%
  \BibitemOpen
  \bibfield  {author} {\bibinfo {author} {\bibfnamefont {P.~A.}\ \bibnamefont
  {Christiansen}}, \bibinfo {author} {\bibfnamefont {W.~C.}\ \bibnamefont
  {Ermler}},\ and\ \bibinfo {author} {\bibfnamefont {K.~S.}\ \bibnamefont
  {Pitzer}},\ }\href {https://doi.org/10.1146/annurev.pc.36.100185.002203}
  {\bibfield  {journal} {\bibinfo  {journal} {Annu. Rev. Phys. Chem.}\ }\textbf
  {\bibinfo {volume} {36}},\ \bibinfo {pages} {407} (\bibinfo {year}
  {1985})}\BibitemShut {NoStop}%
\bibitem [{\citenamefont {Farhi}\ and\ \citenamefont
  {Gutmann}(1998)}]{PhysRevA.58.915}%
  \BibitemOpen
  \bibfield  {author} {\bibinfo {author} {\bibfnamefont {E.}~\bibnamefont
  {Farhi}}\ and\ \bibinfo {author} {\bibfnamefont {S.}~\bibnamefont
  {Gutmann}},\ }\href {https://doi.org/10.1103/PhysRevA.58.915} {\bibfield
  {journal} {\bibinfo  {journal} {Phys. Rev. A}\ }\textbf {\bibinfo {volume}
  {58}},\ \bibinfo {pages} {915} (\bibinfo {year} {1998})}\BibitemShut
  {NoStop}%
\bibitem [{\citenamefont {Watrous}(2001)}]{WATROUS2001376}%
  \BibitemOpen
  \bibfield  {author} {\bibinfo {author} {\bibfnamefont {J.}~\bibnamefont
  {Watrous}},\ }\href {https://doi.org/https://doi.org/10.1006/jcss.2000.1732}
  {\bibfield  {journal} {\bibinfo  {journal} {J. Comput. Syst. Sci.}\ }\textbf
  {\bibinfo {volume} {62}},\ \bibinfo {pages} {376} (\bibinfo {year}
  {2001})}\BibitemShut {NoStop}%
\bibitem [{\citenamefont {Childs}\ and\ \citenamefont
  {Goldstone}(2004)}]{PhysRevA.70.042312}%
  \BibitemOpen
  \bibfield  {author} {\bibinfo {author} {\bibfnamefont {A.~M.}\ \bibnamefont
  {Childs}}\ and\ \bibinfo {author} {\bibfnamefont {J.}~\bibnamefont
  {Goldstone}},\ }\href {https://doi.org/10.1103/PhysRevA.70.042312} {\bibfield
   {journal} {\bibinfo  {journal} {Phys. Rev. A}\ }\textbf {\bibinfo {volume}
  {70}},\ \bibinfo {pages} {042312} (\bibinfo {year} {2004})}\BibitemShut
  {NoStop}%
\bibitem [{\citenamefont {Gerritsma}\ \emph {et~al.}(2010)\citenamefont
  {Gerritsma}, \citenamefont {Kirchmair}, \citenamefont {Zähringer},
  \citenamefont {Solano}, \citenamefont {Blatt},\ and\ \citenamefont
  {Roos}}]{Gerritsma_2010}%
  \BibitemOpen
  \bibfield  {author} {\bibinfo {author} {\bibfnamefont {R.}~\bibnamefont
  {Gerritsma}}, \bibinfo {author} {\bibfnamefont {G.}~\bibnamefont
  {Kirchmair}}, \bibinfo {author} {\bibfnamefont {F.}~\bibnamefont
  {Zähringer}}, \bibinfo {author} {\bibfnamefont {E.}~\bibnamefont {Solano}},
  \bibinfo {author} {\bibfnamefont {R.}~\bibnamefont {Blatt}},\ and\ \bibinfo
  {author} {\bibfnamefont {C.~F.}\ \bibnamefont {Roos}},\ }\href
  {https://doi.org/10.1038/nature08688} {\bibfield  {journal} {\bibinfo
  {journal} {Nature}\ }\textbf {\bibinfo {volume} {463}},\ \bibinfo {pages}
  {68} (\bibinfo {year} {2010})}\BibitemShut {NoStop}%
\bibitem [{\citenamefont {Arrighi}\ \emph {et~al.}(2014)\citenamefont
  {Arrighi}, \citenamefont {Nesme},\ and\ \citenamefont
  {Forets}}]{arrighi2014dirac}%
  \BibitemOpen
  \bibfield  {author} {\bibinfo {author} {\bibfnamefont {P.}~\bibnamefont
  {Arrighi}}, \bibinfo {author} {\bibfnamefont {V.}~\bibnamefont {Nesme}},\
  and\ \bibinfo {author} {\bibfnamefont {M.}~\bibnamefont {Forets}},\ }\href
  {https://doi.org/10.1088/1751-8113/47/46/465302} {\bibfield  {journal}
  {\bibinfo  {journal} {J. Phys. A: Math. Theor.}\ }\textbf {\bibinfo {volume}
  {47}},\ \bibinfo {pages} {465302} (\bibinfo {year} {2014})}\BibitemShut
  {NoStop}%
\bibitem [{\citenamefont {Lamata}\ \emph {et~al.}(2007)\citenamefont {Lamata},
  \citenamefont {Le\'on}, \citenamefont {Sch\"atz},\ and\ \citenamefont
  {Solano}}]{PhysRevLett.98.253005}%
  \BibitemOpen
  \bibfield  {author} {\bibinfo {author} {\bibfnamefont {L.}~\bibnamefont
  {Lamata}}, \bibinfo {author} {\bibfnamefont {J.}~\bibnamefont {Le\'on}},
  \bibinfo {author} {\bibfnamefont {T.}~\bibnamefont {Sch\"atz}},\ and\
  \bibinfo {author} {\bibfnamefont {E.}~\bibnamefont {Solano}},\ }\href
  {https://doi.org/10.1103/PhysRevLett.98.253005} {\bibfield  {journal}
  {\bibinfo  {journal} {Phys. Rev. Lett.}\ }\textbf {\bibinfo {volume} {98}},\
  \bibinfo {pages} {253005} (\bibinfo {year} {2007})}\BibitemShut {NoStop}%
\bibitem [{\citenamefont {Fleury}\ \emph {et~al.}(2023)\citenamefont {Fleury},
  \citenamefont {Hammad},\ and\ \citenamefont {Sadeghi}}]{Fleury_2023}%
  \BibitemOpen
  \bibfield  {author} {\bibinfo {author} {\bibfnamefont {N.}~\bibnamefont
  {Fleury}}, \bibinfo {author} {\bibfnamefont {F.}~\bibnamefont {Hammad}},\
  and\ \bibinfo {author} {\bibfnamefont {P.}~\bibnamefont {Sadeghi}},\ }\href
  {https://doi.org/10.3390/sym15020432} {\bibfield  {journal} {\bibinfo
  {journal} {Symmetry}\ }\textbf {\bibinfo {volume} {15}},\ \bibinfo {pages}
  {432} (\bibinfo {year} {2023})}\BibitemShut {NoStop}%
\bibitem [{\citenamefont {Strauch}(2006)}]{PhysRevA.73.054302}%
  \BibitemOpen
  \bibfield  {author} {\bibinfo {author} {\bibfnamefont {F.~W.}\ \bibnamefont
  {Strauch}},\ }\href {https://doi.org/10.1103/PhysRevA.73.054302} {\bibfield
  {journal} {\bibinfo  {journal} {Phys. Rev. A}\ }\textbf {\bibinfo {volume}
  {73}},\ \bibinfo {pages} {054302} (\bibinfo {year} {2006})}\BibitemShut
  {NoStop}%
\bibitem [{\citenamefont {Chandrashekar}\ \emph {et~al.}(2010)\citenamefont
  {Chandrashekar}, \citenamefont {Banerjee},\ and\ \citenamefont
  {Srikanth}}]{PhysRevA.81.062340}%
  \BibitemOpen
  \bibfield  {author} {\bibinfo {author} {\bibfnamefont {C.~M.}\ \bibnamefont
  {Chandrashekar}}, \bibinfo {author} {\bibfnamefont {S.}~\bibnamefont
  {Banerjee}},\ and\ \bibinfo {author} {\bibfnamefont {R.}~\bibnamefont
  {Srikanth}},\ }\href {https://doi.org/10.1103/PhysRevA.81.062340} {\bibfield
  {journal} {\bibinfo  {journal} {Phys. Rev. A}\ }\textbf {\bibinfo {volume}
  {81}},\ \bibinfo {pages} {062340} (\bibinfo {year} {2010})}\BibitemShut
  {NoStop}%
\bibitem [{\citenamefont {Di~Molfetta}\ \emph {et~al.}(2013)\citenamefont
  {Di~Molfetta}, \citenamefont {Brachet},\ and\ \citenamefont
  {Debbasch}}]{PhysRevA.88.042301}%
  \BibitemOpen
  \bibfield  {author} {\bibinfo {author} {\bibfnamefont {G.}~\bibnamefont
  {Di~Molfetta}}, \bibinfo {author} {\bibfnamefont {M.}~\bibnamefont
  {Brachet}},\ and\ \bibinfo {author} {\bibfnamefont {F.}~\bibnamefont
  {Debbasch}},\ }\href {https://doi.org/10.1103/PhysRevA.88.042301} {\bibfield
  {journal} {\bibinfo  {journal} {Phys. Rev. A}\ }\textbf {\bibinfo {volume}
  {88}},\ \bibinfo {pages} {042301} (\bibinfo {year} {2013})}\BibitemShut
  {NoStop}%
\bibitem [{\citenamefont {Arrighi}\ \emph {et~al.}(2016)\citenamefont
  {Arrighi}, \citenamefont {Facchini},\ and\ \citenamefont
  {Forets}}]{Arrighi2016}%
  \BibitemOpen
  \bibfield  {author} {\bibinfo {author} {\bibfnamefont {P.}~\bibnamefont
  {Arrighi}}, \bibinfo {author} {\bibfnamefont {S.}~\bibnamefont {Facchini}},\
  and\ \bibinfo {author} {\bibfnamefont {M.}~\bibnamefont {Forets}},\ }\href
  {https://doi.org/10.1007/s11128-016-1335-7} {\bibfield  {journal} {\bibinfo
  {journal} {Quantum Inf. Process.}\ }\textbf {\bibinfo {volume} {15}},\
  \bibinfo {pages} {3467} (\bibinfo {year} {2016})}\BibitemShut {NoStop}%
\bibitem [{\citenamefont {Mallick}\ \emph {et~al.}(2019)\citenamefont
  {Mallick}, \citenamefont {Mandal}, \citenamefont {Karan},\ and\ \citenamefont
  {Chandrashekar}}]{mallick2019simulating}%
  \BibitemOpen
  \bibfield  {author} {\bibinfo {author} {\bibfnamefont {A.}~\bibnamefont
  {Mallick}}, \bibinfo {author} {\bibfnamefont {S.}~\bibnamefont {Mandal}},
  \bibinfo {author} {\bibfnamefont {A.}~\bibnamefont {Karan}},\ and\ \bibinfo
  {author} {\bibfnamefont {C.~M.}\ \bibnamefont {Chandrashekar}},\ }\href
  {https://doi.org/10.1088/2399-6528/aafe2f} {\bibfield  {journal} {\bibinfo
  {journal} {J. Phys. Commun.}\ }\textbf {\bibinfo {volume} {3}},\ \bibinfo
  {pages} {015012} (\bibinfo {year} {2019})}\BibitemShut {NoStop}%
\bibitem [{\citenamefont {Braun}\ \emph {et~al.}(1999)\citenamefont {Braun},
  \citenamefont {Su},\ and\ \citenamefont {Grobe}}]{PhysRevA.59.604}%
  \BibitemOpen
  \bibfield  {author} {\bibinfo {author} {\bibfnamefont {J.~W.}\ \bibnamefont
  {Braun}}, \bibinfo {author} {\bibfnamefont {Q.}~\bibnamefont {Su}},\ and\
  \bibinfo {author} {\bibfnamefont {R.}~\bibnamefont {Grobe}},\ }\href
  {https://doi.org/10.1103/PhysRevA.59.604} {\bibfield  {journal} {\bibinfo
  {journal} {Phys. Rev. A}\ }\textbf {\bibinfo {volume} {59}},\ \bibinfo
  {pages} {604} (\bibinfo {year} {1999})}\BibitemShut {NoStop}%
\bibitem [{\citenamefont {Fillion-Gourdeau}\ \emph {et~al.}(2012)\citenamefont
  {Fillion-Gourdeau}, \citenamefont {Lorin},\ and\ \citenamefont
  {Bandrauk}}]{FILLIONGOURDEAU20121403}%
  \BibitemOpen
  \bibfield  {author} {\bibinfo {author} {\bibfnamefont {F.}~\bibnamefont
  {Fillion-Gourdeau}}, \bibinfo {author} {\bibfnamefont {E.}~\bibnamefont
  {Lorin}},\ and\ \bibinfo {author} {\bibfnamefont {A.~D.}\ \bibnamefont
  {Bandrauk}},\ }\href
  {https://doi.org/https://doi.org/10.1016/j.cpc.2012.02.012} {\bibfield
  {journal} {\bibinfo  {journal} {Comput. Phys. Commun.}\ }\textbf {\bibinfo
  {volume} {183}},\ \bibinfo {pages} {1403} (\bibinfo {year}
  {2012})}\BibitemShut {NoStop}%
\bibitem [{\citenamefont {Fillion-Gourdeau}\ \emph {et~al.}(2014)\citenamefont
  {Fillion-Gourdeau}, \citenamefont {Lorin},\ and\ \citenamefont
  {Bandrauk}}]{Fillion_Gourdeau_2014}%
  \BibitemOpen
  \bibfield  {author} {\bibinfo {author} {\bibfnamefont {F.}~\bibnamefont
  {Fillion-Gourdeau}}, \bibinfo {author} {\bibfnamefont {E.}~\bibnamefont
  {Lorin}},\ and\ \bibinfo {author} {\bibfnamefont {A.~D.}\ \bibnamefont
  {Bandrauk}},\ }\href {https://doi.org/10.1016/j.jcp.2014.03.068} {\bibfield
  {journal} {\bibinfo  {journal} {J. Comput. Phys.}\ }\textbf {\bibinfo
  {volume} {272}},\ \bibinfo {pages} {559} (\bibinfo {year}
  {2014})}\BibitemShut {NoStop}%
\bibitem [{\citenamefont {Bottcher}\ and\ \citenamefont
  {Strayer}(1985)}]{PhysRevLett.54.669}%
  \BibitemOpen
  \bibfield  {author} {\bibinfo {author} {\bibfnamefont {C.}~\bibnamefont
  {Bottcher}}\ and\ \bibinfo {author} {\bibfnamefont {M.~R.}\ \bibnamefont
  {Strayer}},\ }\href {https://doi.org/10.1103/PhysRevLett.54.669} {\bibfield
  {journal} {\bibinfo  {journal} {Phys. Rev. Lett.}\ }\textbf {\bibinfo
  {volume} {54}},\ \bibinfo {pages} {669} (\bibinfo {year} {1985})}\BibitemShut
  {NoStop}%
\bibitem [{\citenamefont {Momberger}\ \emph {et~al.}(1996)\citenamefont
  {Momberger}, \citenamefont {Belkacem},\ and\ \citenamefont
  {Sorensen}}]{PhysRevA.53.1605}%
  \BibitemOpen
  \bibfield  {author} {\bibinfo {author} {\bibfnamefont {K.}~\bibnamefont
  {Momberger}}, \bibinfo {author} {\bibfnamefont {A.}~\bibnamefont
  {Belkacem}},\ and\ \bibinfo {author} {\bibfnamefont {A.~H.}\ \bibnamefont
  {Sorensen}},\ }\href {https://doi.org/10.1103/PhysRevA.53.1605} {\bibfield
  {journal} {\bibinfo  {journal} {Phys. Rev. A}\ }\textbf {\bibinfo {volume}
  {53}},\ \bibinfo {pages} {1605} (\bibinfo {year} {1996})}\BibitemShut
  {NoStop}%
\bibitem [{\citenamefont {Lorin}\ and\ \citenamefont
  {Yang}(2022)}]{LORIN2022108474}%
  \BibitemOpen
  \bibfield  {author} {\bibinfo {author} {\bibfnamefont {E.}~\bibnamefont
  {Lorin}}\ and\ \bibinfo {author} {\bibfnamefont {X.}~\bibnamefont {Yang}},\
  }\href {https://doi.org/https://doi.org/10.1016/j.cpc.2022.108474} {\bibfield
   {journal} {\bibinfo  {journal} {Comput. Phys. Commun.}\ }\textbf {\bibinfo
  {volume} {280}},\ \bibinfo {pages} {108474} (\bibinfo {year}
  {2022})}\BibitemShut {NoStop}%
\bibitem [{\citenamefont {Cusson}\ \emph {et~al.}(1985)\citenamefont {Cusson},
  \citenamefont {Reinhard}, \citenamefont {Molitoris}, \citenamefont
  {St\"ocker}, \citenamefont {Strayer},\ and\ \citenamefont
  {Greiner}}]{PhysRevLett.55.2786}%
  \BibitemOpen
  \bibfield  {author} {\bibinfo {author} {\bibfnamefont {R.~Y.}\ \bibnamefont
  {Cusson}}, \bibinfo {author} {\bibfnamefont {P.~G.}\ \bibnamefont
  {Reinhard}}, \bibinfo {author} {\bibfnamefont {J.~J.}\ \bibnamefont
  {Molitoris}}, \bibinfo {author} {\bibfnamefont {H.}~\bibnamefont
  {St\"ocker}}, \bibinfo {author} {\bibfnamefont {M.~R.}\ \bibnamefont
  {Strayer}},\ and\ \bibinfo {author} {\bibfnamefont {W.}~\bibnamefont
  {Greiner}},\ }\href {https://doi.org/10.1103/PhysRevLett.55.2786} {\bibfield
  {journal} {\bibinfo  {journal} {Phys. Rev. Lett.}\ }\textbf {\bibinfo
  {volume} {55}},\ \bibinfo {pages} {2786} (\bibinfo {year}
  {1985})}\BibitemShut {NoStop}%
\bibitem [{\citenamefont {Reinhardt}\ \emph {et~al.}(1981)\citenamefont
  {Reinhardt}, \citenamefont {M\"uller},\ and\ \citenamefont
  {Greiner}}]{PhysRevA.24.103}%
  \BibitemOpen
  \bibfield  {author} {\bibinfo {author} {\bibfnamefont {J.}~\bibnamefont
  {Reinhardt}}, \bibinfo {author} {\bibfnamefont {B.}~\bibnamefont
  {M\"uller}},\ and\ \bibinfo {author} {\bibfnamefont {W.}~\bibnamefont
  {Greiner}},\ }\href {https://doi.org/10.1103/PhysRevA.24.103} {\bibfield
  {journal} {\bibinfo  {journal} {Phys. Rev. A}\ }\textbf {\bibinfo {volume}
  {24}},\ \bibinfo {pages} {103} (\bibinfo {year} {1981})}\BibitemShut
  {NoStop}%
\bibitem [{\citenamefont {Antoine}\ and\ \citenamefont
  {Lorin}(2017)}]{ANTOINE2017150}%
  \BibitemOpen
  \bibfield  {author} {\bibinfo {author} {\bibfnamefont {X.}~\bibnamefont
  {Antoine}}\ and\ \bibinfo {author} {\bibfnamefont {E.}~\bibnamefont
  {Lorin}},\ }\href {https://doi.org/https://doi.org/10.1016/j.cpc.2017.07.001}
  {\bibfield  {journal} {\bibinfo  {journal} {Comput. Phys. Commun.}\ }\textbf
  {\bibinfo {volume} {220}},\ \bibinfo {pages} {150} (\bibinfo {year}
  {2017})}\BibitemShut {NoStop}%
\bibitem [{\citenamefont {Fillion-Gourdeau}\ \emph {et~al.}(2017)\citenamefont
  {Fillion-Gourdeau}, \citenamefont {MacLean},\ and\ \citenamefont
  {Laflamme}}]{Fillion_Gourdeau_2017}%
  \BibitemOpen
  \bibfield  {author} {\bibinfo {author} {\bibfnamefont {F.}~\bibnamefont
  {Fillion-Gourdeau}}, \bibinfo {author} {\bibfnamefont {S.}~\bibnamefont
  {MacLean}},\ and\ \bibinfo {author} {\bibfnamefont {R.}~\bibnamefont
  {Laflamme}},\ }\href {https://doi.org/10.1103/PhysRevA.95.042343} {\bibfield
  {journal} {\bibinfo  {journal} {Phys. Rev. A}\ }\textbf {\bibinfo {volume}
  {95}},\ \bibinfo {pages} {042343} (\bibinfo {year} {2017})}\BibitemShut
  {NoStop}%
\bibitem [{\citenamefont {Alderete}\ \emph {et~al.}(2020)\citenamefont
  {Alderete}, \citenamefont {Singh}, \citenamefont {Nguyen}, \citenamefont
  {Zhu}, \citenamefont {Balu}, \citenamefont {Monroe}, \citenamefont
  {Chandrashekar},\ and\ \citenamefont {Linke}}]{Huerta_Alderete2020-de}%
  \BibitemOpen
  \bibfield  {author} {\bibinfo {author} {\bibfnamefont {C.~H.}\ \bibnamefont
  {Alderete}}, \bibinfo {author} {\bibfnamefont {S.}~\bibnamefont {Singh}},
  \bibinfo {author} {\bibfnamefont {N.~H.}\ \bibnamefont {Nguyen}}, \bibinfo
  {author} {\bibfnamefont {D.}~\bibnamefont {Zhu}}, \bibinfo {author}
  {\bibfnamefont {R.}~\bibnamefont {Balu}}, \bibinfo {author} {\bibfnamefont
  {C.}~\bibnamefont {Monroe}}, \bibinfo {author} {\bibfnamefont {C.~M.}\
  \bibnamefont {Chandrashekar}},\ and\ \bibinfo {author} {\bibfnamefont
  {N.~M.}\ \bibnamefont {Linke}},\ }\href
  {https://doi.org/10.1038/s41467-020-17519-4} {\bibfield  {journal} {\bibinfo
  {journal} {Nat. Commun.}\ }\textbf {\bibinfo {volume} {11}},\  (\bibinfo
  {year} {2020})}\BibitemShut {NoStop}%
\bibitem [{\citenamefont {da~Silva}\ and\ \citenamefont
  {Park}(2022)}]{PhysRevA.106.042602}%
  \BibitemOpen
  \bibfield  {author} {\bibinfo {author} {\bibfnamefont {A.~J.}\ \bibnamefont
  {da~Silva}}\ and\ \bibinfo {author} {\bibfnamefont {D.~K.}\ \bibnamefont
  {Park}},\ }\href {https://doi.org/10.1103/PhysRevA.106.042602} {\bibfield
  {journal} {\bibinfo  {journal} {Phys. Rev. A}\ }\textbf {\bibinfo {volume}
  {106}},\ \bibinfo {pages} {042602} (\bibinfo {year} {2022})}\BibitemShut
  {NoStop}%
\bibitem [{\citenamefont {Childs}\ and\ \citenamefont
  {Liu}(2020)}]{Childs_2020}%
  \BibitemOpen
  \bibfield  {author} {\bibinfo {author} {\bibfnamefont {A.~M.}\ \bibnamefont
  {Childs}}\ and\ \bibinfo {author} {\bibfnamefont {J.-P.}\ \bibnamefont
  {Liu}},\ }\href {https://doi.org/10.1007/s00220-020-03699-z} {\bibfield
  {journal} {\bibinfo  {journal} {Commun. Math. Phys.}\ }\textbf {\bibinfo
  {volume} {375}},\ \bibinfo {pages} {1427} (\bibinfo {year}
  {2020})}\BibitemShut {NoStop}%
\bibitem [{\citenamefont {Childs}\ \emph
  {et~al.}(2021{\natexlab{a}})\citenamefont {Childs}, \citenamefont {Liu},\
  and\ \citenamefont {Ostrander}}]{Childs2021highprecision}%
  \BibitemOpen
  \bibfield  {author} {\bibinfo {author} {\bibfnamefont {A.~M.}\ \bibnamefont
  {Childs}}, \bibinfo {author} {\bibfnamefont {J.-P.}\ \bibnamefont {Liu}},\
  and\ \bibinfo {author} {\bibfnamefont {A.}~\bibnamefont {Ostrander}},\ }\href
  {https://doi.org/10.22331/q-2021-11-10-574} {\bibfield  {journal} {\bibinfo
  {journal} {{Quantum}}\ }\textbf {\bibinfo {volume} {5}},\ \bibinfo {pages}
  {574} (\bibinfo {year} {2021}{\natexlab{a}})}\BibitemShut {NoStop}%
\bibitem [{\citenamefont {Childs}\ \emph {et~al.}(2022)\citenamefont {Childs},
  \citenamefont {Leng}, \citenamefont {Li}, \citenamefont {Liu},\ and\
  \citenamefont {Zhang}}]{Childs2022quantumsimulationof}%
  \BibitemOpen
  \bibfield  {author} {\bibinfo {author} {\bibfnamefont {A.~M.}\ \bibnamefont
  {Childs}}, \bibinfo {author} {\bibfnamefont {J.}~\bibnamefont {Leng}},
  \bibinfo {author} {\bibfnamefont {T.}~\bibnamefont {Li}}, \bibinfo {author}
  {\bibfnamefont {J.-P.}\ \bibnamefont {Liu}},\ and\ \bibinfo {author}
  {\bibfnamefont {C.}~\bibnamefont {Zhang}},\ }\href
  {https://doi.org/10.22331/q-2022-11-17-860} {\bibfield  {journal} {\bibinfo
  {journal} {{Quantum}}\ }\textbf {\bibinfo {volume} {6}},\ \bibinfo {pages}
  {860} (\bibinfo {year} {2022})}\BibitemShut {NoStop}%
\bibitem [{\citenamefont {Ahmad}\ \emph {et~al.}(2020)\citenamefont {Ahmad},
  \citenamefont {Sajjad},\ and\ \citenamefont {Sajid}}]{Ahmad_2020}%
  \BibitemOpen
  \bibfield  {author} {\bibinfo {author} {\bibfnamefont {R.}~\bibnamefont
  {Ahmad}}, \bibinfo {author} {\bibfnamefont {U.}~\bibnamefont {Sajjad}},\ and\
  \bibinfo {author} {\bibfnamefont {M.}~\bibnamefont {Sajid}},\ }\href
  {https://doi.org/10.1088/1572-9494/ab7ec5} {\bibfield  {journal} {\bibinfo
  {journal} {Commun. Theor. Phys.}\ }\textbf {\bibinfo {volume} {72}},\
  \bibinfo {pages} {065101} (\bibinfo {year} {2020})}\BibitemShut {NoStop}%
\bibitem [{\citenamefont {Nzongani}\ \emph {et~al.}(2022)\citenamefont
  {Nzongani}, \citenamefont {Zylberman}, \citenamefont {Doncecchi},
  \citenamefont {Pérez}, \citenamefont {Debbasch},\ and\ \citenamefont
  {Arnault}}]{nzongani2022quantum}%
  \BibitemOpen
  \bibfield  {author} {\bibinfo {author} {\bibfnamefont {U.}~\bibnamefont
  {Nzongani}}, \bibinfo {author} {\bibfnamefont {J.}~\bibnamefont {Zylberman}},
  \bibinfo {author} {\bibfnamefont {C.-E.}\ \bibnamefont {Doncecchi}}, \bibinfo
  {author} {\bibfnamefont {A.}~\bibnamefont {Pérez}}, \bibinfo {author}
  {\bibfnamefont {F.}~\bibnamefont {Debbasch}},\ and\ \bibinfo {author}
  {\bibfnamefont {P.}~\bibnamefont {Arnault}},\ }\href@noop {} {\bibinfo
  {title} {Quantum circuits for discrete-time quantum walks with
  position-dependent coin operator}} (\bibinfo {year} {2022}),\ \Eprint
  {https://arxiv.org/abs/2211.05271} {arXiv:2211.05271 [quant-ph]} \BibitemShut
  {NoStop}%
\bibitem [{\citenamefont {Toyama}\ and\ \citenamefont
  {Nogami}(1999)}]{PhysRevA.59.1056}%
  \BibitemOpen
  \bibfield  {author} {\bibinfo {author} {\bibfnamefont {F.~M.}\ \bibnamefont
  {Toyama}}\ and\ \bibinfo {author} {\bibfnamefont {Y.}~\bibnamefont
  {Nogami}},\ }\href {https://doi.org/10.1103/PhysRevA.59.1056} {\bibfield
  {journal} {\bibinfo  {journal} {Phys. Rev. A}\ }\textbf {\bibinfo {volume}
  {59}},\ \bibinfo {pages} {1056} (\bibinfo {year} {1999})}\BibitemShut
  {NoStop}%
\bibitem [{\citenamefont {Childs}\ \emph
  {et~al.}(2021{\natexlab{b}})\citenamefont {Childs}, \citenamefont {Su},
  \citenamefont {Tran}, \citenamefont {Wiebe},\ and\ \citenamefont
  {Zhu}}]{PhysRevX.11.011020}%
  \BibitemOpen
  \bibfield  {author} {\bibinfo {author} {\bibfnamefont {A.~M.}\ \bibnamefont
  {Childs}}, \bibinfo {author} {\bibfnamefont {Y.}~\bibnamefont {Su}}, \bibinfo
  {author} {\bibfnamefont {M.~C.}\ \bibnamefont {Tran}}, \bibinfo {author}
  {\bibfnamefont {N.}~\bibnamefont {Wiebe}},\ and\ \bibinfo {author}
  {\bibfnamefont {S.}~\bibnamefont {Zhu}},\ }\href
  {https://doi.org/10.1103/PhysRevX.11.011020} {\bibfield  {journal} {\bibinfo
  {journal} {Phys. Rev. X}\ }\textbf {\bibinfo {volume} {11}},\ \bibinfo
  {pages} {011020} (\bibinfo {year} {2021}{\natexlab{b}})}\BibitemShut
  {NoStop}%
\bibitem [{\citenamefont {Lorin}\ and\ \citenamefont
  {Bandrauk}(2011)}]{LORIN2011190}%
  \BibitemOpen
  \bibfield  {author} {\bibinfo {author} {\bibfnamefont {E.}~\bibnamefont
  {Lorin}}\ and\ \bibinfo {author} {\bibfnamefont {A.}~\bibnamefont
  {Bandrauk}},\ }\href
  {https://doi.org/https://doi.org/10.1016/j.nonrwa.2010.06.007} {\bibfield
  {journal} {\bibinfo  {journal} {Nonlinear Anal. Real World Appl.}\ }\textbf
  {\bibinfo {volume} {12}},\ \bibinfo {pages} {190} (\bibinfo {year}
  {2011})}\BibitemShut {NoStop}%
\bibitem [{\citenamefont {Moura}\ and\ \citenamefont
  {Kubrusly}(2012)}]{10.5555/2430727}%
  \BibitemOpen
  \bibfield  {author} {\bibinfo {author} {\bibfnamefont {C.~A.~d.}\
  \bibnamefont {Moura}}\ and\ \bibinfo {author} {\bibfnamefont {C.~S.}\
  \bibnamefont {Kubrusly}},\ }\href@noop {} {\emph {\bibinfo {title} {The
  Courant-Friedrichs-Lewy (CFL) Condition: 80 Years After Its Discovery}}}\
  (\bibinfo  {publisher} {Birkh\"{a}user Basel},\ \bibinfo {year}
  {2012})\BibitemShut {NoStop}%
\bibitem [{\citenamefont {Shikano}(2013)}]{shikano2013discrete}%
  \BibitemOpen
  \bibfield  {author} {\bibinfo {author} {\bibfnamefont {Y.}~\bibnamefont
  {Shikano}},\ }\href {https://doi.org/10.1166/jctn.2013.3097} {\bibfield
  {journal} {\bibinfo  {journal} {J. Comput. Theor. Nanosci.}\ }\textbf
  {\bibinfo {volume} {10}},\ \bibinfo {pages} {1558} (\bibinfo {year}
  {2013})}\BibitemShut {NoStop}%
\bibitem [{\citenamefont {Childs}(2010)}]{Childs2010-dv}%
  \BibitemOpen
  \bibfield  {author} {\bibinfo {author} {\bibfnamefont {A.~M.}\ \bibnamefont
  {Childs}},\ }\href {https://doi.org/10.1007/s00220-009-0930-1} {\bibfield
  {journal} {\bibinfo  {journal} {Commun. Math. Phys.}\ }\textbf {\bibinfo
  {volume} {294}},\ \bibinfo {pages} {581} (\bibinfo {year}
  {2010})}\BibitemShut {NoStop}%
\bibitem [{\citenamefont {Acasiete}\ \emph {et~al.}(2020)\citenamefont
  {Acasiete}, \citenamefont {Agostini}, \citenamefont {Moqadam},\ and\
  \citenamefont {Portugal}}]{Acasiete_2020}%
  \BibitemOpen
  \bibfield  {author} {\bibinfo {author} {\bibfnamefont {F.}~\bibnamefont
  {Acasiete}}, \bibinfo {author} {\bibfnamefont {F.~P.}\ \bibnamefont
  {Agostini}}, \bibinfo {author} {\bibfnamefont {J.~K.}\ \bibnamefont
  {Moqadam}},\ and\ \bibinfo {author} {\bibfnamefont {R.}~\bibnamefont
  {Portugal}},\ }\href {https://doi.org/10.1007/s11128-020-02938-5} {\bibfield
  {journal} {\bibinfo  {journal} {Quantum Inf. Process.}\ }\textbf {\bibinfo
  {volume} {19}},\  (\bibinfo {year} {2020})}\BibitemShut {NoStop}%
\bibitem [{\citenamefont {Qiang}\ \emph {et~al.}(2016)\citenamefont {Qiang},
  \citenamefont {Loke}, \citenamefont {Montanaro}, \citenamefont
  {Aungskunsiri}, \citenamefont {Zhou}, \citenamefont {O'Brien}, \citenamefont
  {Wang},\ and\ \citenamefont {Matthews}}]{Qiang_2016}%
  \BibitemOpen
  \bibfield  {author} {\bibinfo {author} {\bibfnamefont {X.}~\bibnamefont
  {Qiang}}, \bibinfo {author} {\bibfnamefont {T.}~\bibnamefont {Loke}},
  \bibinfo {author} {\bibfnamefont {A.}~\bibnamefont {Montanaro}}, \bibinfo
  {author} {\bibfnamefont {K.}~\bibnamefont {Aungskunsiri}}, \bibinfo {author}
  {\bibfnamefont {X.}~\bibnamefont {Zhou}}, \bibinfo {author} {\bibfnamefont
  {J.~L.}\ \bibnamefont {O'Brien}}, \bibinfo {author} {\bibfnamefont {J.~B.}\
  \bibnamefont {Wang}},\ and\ \bibinfo {author} {\bibfnamefont {J.~C.~F.}\
  \bibnamefont {Matthews}},\ }\href {https://doi.org/10.1038/ncomms11511}
  {\bibfield  {journal} {\bibinfo  {journal} {Nat. Commun.}\ }\textbf {\bibinfo
  {volume} {7}},\  (\bibinfo {year} {2016})}\BibitemShut {NoStop}%
\bibitem [{\citenamefont {Davis}(1979)}]{davis1979circulant}%
  \BibitemOpen
  \bibfield  {author} {\bibinfo {author} {\bibfnamefont {P.~J.}\ \bibnamefont
  {Davis}},\ }\href@noop {} {\emph {\bibinfo {title} {Circulant matrices}}},\
  Vol.\ \bibinfo {volume} {120}\ (\bibinfo  {publisher} {Wiley New York},\
  \bibinfo {year} {1979})\BibitemShut {NoStop}%
\bibitem [{\citenamefont {Nielsen}\ and\ \citenamefont
  {Chuang}(2010)}]{nielsen_chuang_2010}%
  \BibitemOpen
  \bibfield  {author} {\bibinfo {author} {\bibfnamefont {M.~A.}\ \bibnamefont
  {Nielsen}}\ and\ \bibinfo {author} {\bibfnamefont {I.~L.}\ \bibnamefont
  {Chuang}},\ }\href {https://doi.org/10.1017/CBO9780511976667} {\emph
  {\bibinfo {title} {Quantum Computation and Quantum Information: 10th
  Anniversary Edition}}}\ (\bibinfo  {publisher} {Cambridge University Press},\
  \bibinfo {year} {2010})\BibitemShut {NoStop}%
\bibitem [{\citenamefont {Zhou}\ \emph {et~al.}(2011)\citenamefont {Zhou},
  \citenamefont {Ralph}, \citenamefont {Kalasuwan}, \citenamefont {Zhang},
  \citenamefont {Peruzzo}, \citenamefont {Lanyon},\ and\ \citenamefont
  {O'Brien}}]{Zhou2011}%
  \BibitemOpen
  \bibfield  {author} {\bibinfo {author} {\bibfnamefont {X.-Q.}\ \bibnamefont
  {Zhou}}, \bibinfo {author} {\bibfnamefont {T.~C.}\ \bibnamefont {Ralph}},
  \bibinfo {author} {\bibfnamefont {P.}~\bibnamefont {Kalasuwan}}, \bibinfo
  {author} {\bibfnamefont {M.}~\bibnamefont {Zhang}}, \bibinfo {author}
  {\bibfnamefont {A.}~\bibnamefont {Peruzzo}}, \bibinfo {author} {\bibfnamefont
  {B.~P.}\ \bibnamefont {Lanyon}},\ and\ \bibinfo {author} {\bibfnamefont
  {J.~L.}\ \bibnamefont {O'Brien}},\ }\href
  {https://doi.org/10.1038/ncomms1392} {\bibfield  {journal} {\bibinfo
  {journal} {Nat. Commun.}\ }\textbf {\bibinfo {volume} {2}},\ \bibinfo {pages}
  {413} (\bibinfo {year} {2011})}\BibitemShut {NoStop}%
\bibitem [{\citenamefont {Ollitrault}\ \emph {et~al.}(2020)\citenamefont
  {Ollitrault}, \citenamefont {Mazzola},\ and\ \citenamefont
  {Tavernelli}}]{PhysRevLett.125.260511}%
  \BibitemOpen
  \bibfield  {author} {\bibinfo {author} {\bibfnamefont {P.~J.}\ \bibnamefont
  {Ollitrault}}, \bibinfo {author} {\bibfnamefont {G.}~\bibnamefont
  {Mazzola}},\ and\ \bibinfo {author} {\bibfnamefont {I.}~\bibnamefont
  {Tavernelli}},\ }\href {https://doi.org/10.1103/PhysRevLett.125.260511}
  {\bibfield  {journal} {\bibinfo  {journal} {Phys. Rev. Lett.}\ }\textbf
  {\bibinfo {volume} {125}},\ \bibinfo {pages} {260511} (\bibinfo {year}
  {2020})}\BibitemShut {NoStop}%
\bibitem [{\citenamefont {Welch}\ \emph {et~al.}(2014)\citenamefont {Welch},
  \citenamefont {Greenbaum}, \citenamefont {Mostame},\ and\ \citenamefont
  {Aspuru-Guzik}}]{welch_2014}%
  \BibitemOpen
  \bibfield  {author} {\bibinfo {author} {\bibfnamefont {J.}~\bibnamefont
  {Welch}}, \bibinfo {author} {\bibfnamefont {D.}~\bibnamefont {Greenbaum}},
  \bibinfo {author} {\bibfnamefont {S.}~\bibnamefont {Mostame}},\ and\ \bibinfo
  {author} {\bibfnamefont {A.}~\bibnamefont {Aspuru-Guzik}},\ }\href
  {https://doi.org/10.1088/1367-2630/16/3/033040} {\bibfield  {journal}
  {\bibinfo  {journal} {New J. Phys.}\ }\textbf {\bibinfo {volume} {16}},\
  \bibinfo {pages} {033040} (\bibinfo {year} {2014})}\BibitemShut {NoStop}%
\bibitem [{\citenamefont {Shen}\ \emph {et~al.}(2011)\citenamefont {Shen},
  \citenamefont {Tang},\ and\ \citenamefont {Wang}}]{10.5555/2073480}%
  \BibitemOpen
  \bibfield  {author} {\bibinfo {author} {\bibfnamefont {J.}~\bibnamefont
  {Shen}}, \bibinfo {author} {\bibfnamefont {T.}~\bibnamefont {Tang}},\ and\
  \bibinfo {author} {\bibfnamefont {L.-L.}\ \bibnamefont {Wang}},\ }\href@noop
  {} {\emph {\bibinfo {title} {Spectral Methods: Algorithms, Analysis and
  Applications}}},\ \bibinfo {edition} {1st}\ ed.\ (\bibinfo  {publisher}
  {Springer Publishing Company, Incorporated},\ \bibinfo {year}
  {2011})\BibitemShut {NoStop}%
\bibitem [{\citenamefont {Benenti}\ and\ \citenamefont
  {Strini}(2008)}]{benenti2008}%
  \BibitemOpen
  \bibfield  {author} {\bibinfo {author} {\bibfnamefont {G.}~\bibnamefont
  {Benenti}}\ and\ \bibinfo {author} {\bibfnamefont {G.}~\bibnamefont
  {Strini}},\ }\href {https://doi.org/10.1119/1.2894532} {\bibfield  {journal}
  {\bibinfo  {journal} {Am. J. Phys.}\ }\textbf {\bibinfo {volume} {76}},\
  \bibinfo {pages} {657–662} (\bibinfo {year} {2008})}\BibitemShut {NoStop}%
\bibitem [{\citenamefont {Berry}\ \emph {et~al.}(2006)\citenamefont {Berry},
  \citenamefont {Ahokas}, \citenamefont {Cleve},\ and\ \citenamefont
  {Sanders}}]{Berry_2006}%
  \BibitemOpen
  \bibfield  {author} {\bibinfo {author} {\bibfnamefont {D.~W.}\ \bibnamefont
  {Berry}}, \bibinfo {author} {\bibfnamefont {G.}~\bibnamefont {Ahokas}},
  \bibinfo {author} {\bibfnamefont {R.}~\bibnamefont {Cleve}},\ and\ \bibinfo
  {author} {\bibfnamefont {B.~C.}\ \bibnamefont {Sanders}},\ }\href
  {https://doi.org/10.1007/s00220-006-0150-x} {\bibfield  {journal} {\bibinfo
  {journal} {Commun. Math. Phys.}\ }\textbf {\bibinfo {volume} {270}},\
  \bibinfo {pages} {359} (\bibinfo {year} {2006})}\BibitemShut {NoStop}%
\bibitem [{\citenamefont {Klein}(1929)}]{Klein1929-xu}%
  \BibitemOpen
  \bibfield  {author} {\bibinfo {author} {\bibfnamefont {O.}~\bibnamefont
  {Klein}},\ }\href@noop {} {\bibfield  {journal} {\bibinfo  {journal} {Z.
  Phys.}\ }\textbf {\bibinfo {volume} {53}},\ \bibinfo {pages} {157} (\bibinfo
  {year} {1929})}\BibitemShut {NoStop}%
\bibitem [{\citenamefont {{Dombey}}\ and\ \citenamefont
  {{Calogeracos}}(1999)}]{1999PhR...315...41D}%
  \BibitemOpen
  \bibfield  {author} {\bibinfo {author} {\bibfnamefont {N.}~\bibnamefont
  {{Dombey}}}\ and\ \bibinfo {author} {\bibfnamefont {A.}~\bibnamefont
  {{Calogeracos}}},\ }\href {https://doi.org/10.1016/S0370-1573(99)00023-X}
  {\bibfield  {journal} {\bibinfo  {journal} {Phys. Rep.}\ }\textbf {\bibinfo
  {volume} {315}},\ \bibinfo {pages} {41} (\bibinfo {year} {1999})}\BibitemShut
  {NoStop}%
\bibitem [{\citenamefont {{Krekora}}\ \emph {et~al.}(2004)\citenamefont
  {{Krekora}}, \citenamefont {{Su}},\ and\ \citenamefont
  {{Grobe}}}]{2004PhRvL..92d0406K}%
  \BibitemOpen
  \bibfield  {author} {\bibinfo {author} {\bibfnamefont {P.}~\bibnamefont
  {{Krekora}}}, \bibinfo {author} {\bibfnamefont {Q.}~\bibnamefont {{Su}}},\
  and\ \bibinfo {author} {\bibfnamefont {R.}~\bibnamefont {{Grobe}}},\ }\href
  {https://doi.org/10.1103/PhysRevLett.92.040406} {\bibfield  {journal}
  {\bibinfo  {journal} {\prl}\ }\textbf {\bibinfo {volume} {92}},\ \bibinfo
  {eid} {040406} (\bibinfo {year} {2004})}\BibitemShut {NoStop}%
\bibitem [{\citenamefont {Katsnelson}\ \emph {et~al.}(2006)\citenamefont
  {Katsnelson}, \citenamefont {Novoselov},\ and\ \citenamefont
  {Geim}}]{Chiral2006}%
  \BibitemOpen
  \bibfield  {author} {\bibinfo {author} {\bibfnamefont {M.~I.}\ \bibnamefont
  {Katsnelson}}, \bibinfo {author} {\bibfnamefont {K.~S.}\ \bibnamefont
  {Novoselov}},\ and\ \bibinfo {author} {\bibfnamefont {A.~K.}\ \bibnamefont
  {Geim}},\ }\href {https://doi.org/10.1038/nphys384} {\bibfield  {journal}
  {\bibinfo  {journal} {Nat. Phys.}\ }\textbf {\bibinfo {volume} {2}},\
  \bibinfo {pages} {620–625} (\bibinfo {year} {2006})}\BibitemShut {NoStop}%
\bibitem [{\citenamefont {{Su}}\ \emph {et~al.}(1993)\citenamefont {{Su}},
  \citenamefont {{Siu}},\ and\ \citenamefont {{Chou}}}]{1993JPhA...26.1001S}%
  \BibitemOpen
  \bibfield  {author} {\bibinfo {author} {\bibfnamefont {R.-K.}\ \bibnamefont
  {{Su}}}, \bibinfo {author} {\bibfnamefont {G.~G.}\ \bibnamefont {{Siu}}},\
  and\ \bibinfo {author} {\bibfnamefont {X.}~\bibnamefont {{Chou}}},\ }\href
  {https://doi.org/10.1088/0305-4470/26/4/026} {\bibfield  {journal} {\bibinfo
  {journal} {J. Phys. A: Math. Gen.}\ }\textbf {\bibinfo {volume} {26}},\
  \bibinfo {pages} {1001} (\bibinfo {year} {1993})}\BibitemShut {NoStop}%
\bibitem [{\citenamefont {Krekora}\ \emph {et~al.}(2005)\citenamefont
  {Krekora}, \citenamefont {Su},\ and\ \citenamefont
  {Grobe}}]{PhysRevA.72.064103}%
  \BibitemOpen
  \bibfield  {author} {\bibinfo {author} {\bibfnamefont {P.}~\bibnamefont
  {Krekora}}, \bibinfo {author} {\bibfnamefont {Q.}~\bibnamefont {Su}},\ and\
  \bibinfo {author} {\bibfnamefont {R.}~\bibnamefont {Grobe}},\ }\href
  {https://doi.org/10.1103/PhysRevA.72.064103} {\bibfield  {journal} {\bibinfo
  {journal} {Phys. Rev. A}\ }\textbf {\bibinfo {volume} {72}},\ \bibinfo
  {pages} {064103} (\bibinfo {year} {2005})}\BibitemShut {NoStop}%
\bibitem [{\citenamefont {Nakaji}\ \emph {et~al.}(2022)\citenamefont {Nakaji},
  \citenamefont {Uno}, \citenamefont {Suzuki}, \citenamefont {Raymond},
  \citenamefont {Onodera}, \citenamefont {Tanaka}, \citenamefont {Tezuka},
  \citenamefont {Mitsuda},\ and\ \citenamefont
  {Yamamoto}}]{PhysRevResearch.4.023136}%
  \BibitemOpen
  \bibfield  {author} {\bibinfo {author} {\bibfnamefont {K.}~\bibnamefont
  {Nakaji}}, \bibinfo {author} {\bibfnamefont {S.}~\bibnamefont {Uno}},
  \bibinfo {author} {\bibfnamefont {Y.}~\bibnamefont {Suzuki}}, \bibinfo
  {author} {\bibfnamefont {R.}~\bibnamefont {Raymond}}, \bibinfo {author}
  {\bibfnamefont {T.}~\bibnamefont {Onodera}}, \bibinfo {author} {\bibfnamefont
  {T.}~\bibnamefont {Tanaka}}, \bibinfo {author} {\bibfnamefont
  {H.}~\bibnamefont {Tezuka}}, \bibinfo {author} {\bibfnamefont
  {N.}~\bibnamefont {Mitsuda}},\ and\ \bibinfo {author} {\bibfnamefont
  {N.}~\bibnamefont {Yamamoto}},\ }\href
  {https://doi.org/10.1103/PhysRevResearch.4.023136} {\bibfield  {journal}
  {\bibinfo  {journal} {Phys. Rev. Res.}\ }\textbf {\bibinfo {volume} {4}},\
  \bibinfo {pages} {023136} (\bibinfo {year} {2022})}\BibitemShut {NoStop}%
\bibitem [{\citenamefont {Mitsuda}\ \emph {et~al.}(2023)\citenamefont
  {Mitsuda}, \citenamefont {Nakaji}, \citenamefont {Suzuki}, \citenamefont
  {Tanaka}, \citenamefont {Raymond}, \citenamefont {Tezuka}, \citenamefont
  {Onodera},\ and\ \citenamefont {Yamamoto}}]{mitsuda2023approximate}%
  \BibitemOpen
  \bibfield  {author} {\bibinfo {author} {\bibfnamefont {N.}~\bibnamefont
  {Mitsuda}}, \bibinfo {author} {\bibfnamefont {K.}~\bibnamefont {Nakaji}},
  \bibinfo {author} {\bibfnamefont {Y.}~\bibnamefont {Suzuki}}, \bibinfo
  {author} {\bibfnamefont {T.}~\bibnamefont {Tanaka}}, \bibinfo {author}
  {\bibfnamefont {R.}~\bibnamefont {Raymond}}, \bibinfo {author} {\bibfnamefont
  {H.}~\bibnamefont {Tezuka}}, \bibinfo {author} {\bibfnamefont
  {T.}~\bibnamefont {Onodera}},\ and\ \bibinfo {author} {\bibfnamefont
  {N.}~\bibnamefont {Yamamoto}},\ }\href@noop {} {\bibinfo {title} {Approximate
  complex amplitude encoding algorithm and its application to data
  classification problems}} (\bibinfo {year} {2023}),\ \Eprint
  {https://arxiv.org/abs/2211.13039} {arXiv:2211.13039 [quant-ph]} \BibitemShut
  {NoStop}%
\bibitem [{\citenamefont {Leadbeater}\ \emph {et~al.}(2023)\citenamefont
  {Leadbeater}, \citenamefont {Fitzpatrick}, \citenamefont {Ramo},\ and\
  \citenamefont {Thom}}]{leadbeater2023nonunitary}%
  \BibitemOpen
  \bibfield  {author} {\bibinfo {author} {\bibfnamefont {C.}~\bibnamefont
  {Leadbeater}}, \bibinfo {author} {\bibfnamefont {N.}~\bibnamefont
  {Fitzpatrick}}, \bibinfo {author} {\bibfnamefont {D.~M.}\ \bibnamefont
  {Ramo}},\ and\ \bibinfo {author} {\bibfnamefont {A.~J.~W.}\ \bibnamefont
  {Thom}},\ }\href@noop {} {\bibinfo {title} {Non-unitary trotter circuits for
  imaginary time evolution}} (\bibinfo {year} {2023}),\ \Eprint
  {https://arxiv.org/abs/2304.07917} {arXiv:2304.07917 [quant-ph]} \BibitemShut
  {NoStop}%
\bibitem [{\citenamefont {Viermann}\ \emph {et~al.}(2022)\citenamefont
  {Viermann}, \citenamefont {Sparn}, \citenamefont {Liebster}, \citenamefont
  {Hans}, \citenamefont {Kath}, \citenamefont {Parra-L{\'{o}}pez},
  \citenamefont {Tolosa-Sime{\'{o}}n}, \citenamefont {S{\'{a}}nchez-Kuntz},
  \citenamefont {Haas}, \citenamefont {Strobel}, \citenamefont {Floerchinger},\
  and\ \citenamefont {Oberthaler}}]{Viermann_2022}%
  \BibitemOpen
  \bibfield  {author} {\bibinfo {author} {\bibfnamefont {C.}~\bibnamefont
  {Viermann}}, \bibinfo {author} {\bibfnamefont {M.}~\bibnamefont {Sparn}},
  \bibinfo {author} {\bibfnamefont {N.}~\bibnamefont {Liebster}}, \bibinfo
  {author} {\bibfnamefont {M.}~\bibnamefont {Hans}}, \bibinfo {author}
  {\bibfnamefont {E.}~\bibnamefont {Kath}}, \bibinfo {author} {\bibfnamefont
  {{\'{A}}.}~\bibnamefont {Parra-L{\'{o}}pez}}, \bibinfo {author}
  {\bibfnamefont {M.}~\bibnamefont {Tolosa-Sime{\'{o}}n}}, \bibinfo {author}
  {\bibfnamefont {N.}~\bibnamefont {S{\'{a}}nchez-Kuntz}}, \bibinfo {author}
  {\bibfnamefont {T.}~\bibnamefont {Haas}}, \bibinfo {author} {\bibfnamefont
  {H.}~\bibnamefont {Strobel}}, \bibinfo {author} {\bibfnamefont
  {S.}~\bibnamefont {Floerchinger}},\ and\ \bibinfo {author} {\bibfnamefont
  {M.~K.}\ \bibnamefont {Oberthaler}},\ }\href
  {https://doi.org/10.1038/s41586-022-05313-9} {\bibfield  {journal} {\bibinfo
  {journal} {Nature}\ }\textbf {\bibinfo {volume} {611}},\ \bibinfo {pages}
  {260} (\bibinfo {year} {2022})}\BibitemShut {NoStop}%
\bibitem [{\citenamefont {Pollock}(2010)}]{Pollock:2010zz}%
  \BibitemOpen
  \bibfield  {author} {\bibinfo {author} {\bibfnamefont {M.~D.}\ \bibnamefont
  {Pollock}},\ }\href@noop {} {\bibfield  {journal} {\bibinfo  {journal} {Acta
  Phys. Polon. B}\ }\textbf {\bibinfo {volume} {41}},\ \bibinfo {pages} {1827}
  (\bibinfo {year} {2010})}\BibitemShut {NoStop}%
\bibitem [{\citenamefont {Brun}\ and\ \citenamefont
  {Mlodinow}(2020)}]{PhysRevA.102.062222}%
  \BibitemOpen
  \bibfield  {author} {\bibinfo {author} {\bibfnamefont {T.~A.}\ \bibnamefont
  {Brun}}\ and\ \bibinfo {author} {\bibfnamefont {L.}~\bibnamefont
  {Mlodinow}},\ }\href {https://doi.org/10.1103/PhysRevA.102.062222} {\bibfield
   {journal} {\bibinfo  {journal} {Phys. Rev. A}\ }\textbf {\bibinfo {volume}
  {102}},\ \bibinfo {pages} {062222} (\bibinfo {year} {2020})}\BibitemShut
  {NoStop}%
\end{thebibliography}%

\end{document}